\title{Random Dictators with a Random Referee: Constant Sample Complexity Mechanisms for Social Choice}
\author{Brandon Fain\\
Duke University
\and
Ashish Goel\\
Stanford University
\and 
Kamesh Munagala\\
Duke University
\and
Nina Prabhu\\
Stanford University
}
\newcommand {\Se}{\mathcal S}
\newtheorem{theorem}{Theorem}
\newtheorem{corollary}{Corollary}
\newtheorem{lemma}{Lemma}
\newtheorem{definition}{Definition}
\begin{document}
\maketitle

\begin{abstract}
We study social choice mechanisms in an implicit utilitarian framework with a metric constraint, where the goal is to minimize \textit{Distortion}, the worst case social cost of an ordinal mechanism relative to underlying cardinal utilities. We consider two additional desiderata: \textit{Constant sample complexity} and \textit{Squared Distortion}. Constant sample complexity means that the mechanism (potentially randomized) only uses a constant number of ordinal queries regardless of the number of voters and alternatives. Squared Distortion is a measure of variance of the Distortion of a randomized mechanism.  

Our primary contribution is the first social choice mechanism with constant sample complexity \textit{and} constant Squared Distortion (which also implies constant Distortion). We call the mechanism \textit{Random Referee}, because it uses a random agent to compare two alternatives that are the favorites of two other random agents. We prove that the use of a comparison query is necessary: no mechanism that only elicits the top-k preferred alternatives of voters (for constant k) can have Squared Distortion that is sublinear in the number of alternatives. We also prove that unlike any top-k only mechanism, the Distortion of Random Referee meaningfully improves on benign metric spaces, using the Euclidean plane as a canonical example. Finally, among top-1 only mechanisms, we introduce \textit{Random Oligarchy}. The mechanism asks just 3 queries and is essentially optimal among the class of such mechanisms with respect to Distortion. 

In summary, we demonstrate the surprising power of constant sample complexity mechanisms generally, and just three random voters in particular, to provide some of the best known results in the implicit utilitarian framework.

\end{abstract}

\section{Introduction}
\label{sec:intro}
Consider the social choice problem of deciding on an allocation of public tax dollars to public projects. This is a voting problem over budgets. Clearly, the number of voters in such situations can be large. More interestingly, unlike in traditional social choice theory, there is no reason to believe that the number of alternatives (budgets) is small. It is therefore unreasonable to assume that we can elicit full ordinal preferences over alternatives from every agent. For a voting mechanism to be practical in such a setting, one would ideally like it to require only an absolute constant number of simple queries, regardless of the number of voters and alternatives. We call this property \textit{constant sample complexity}, and we explore mechanisms of this sort in this paper.

We define our model more formally in Section~\ref{sec:formal}, but at a high level, we have a set $N$ of agents (or voters) and a set of alternatives $\Se$, from which we must choose a single outcome. We assume that $N$ and $\Se$ are both large, and that eliciting the full ordinal rankings may be prohibitively difficult.  Instead, we work with an ordinal query model, and a constant sample complexity mechanism uses only a constant number of these queries.

\begin{description}
\item[Top-$k$ Query.] ``What are your $k$ favorite alternatives, in order?'' (We call a top-$1$ query a \textbf{favorite} query); and 
\item[Comparison Query.]  ``Which of two given alternatives do you prefer?''  
\end{description} 

Query models are not just of theoretical interest. They can be used to reduce cognitive overload in voting. For example, in the context of Participatory Budgeting~\cite{PBP}, the space of possible budget allocations is large, and one mechanism is to ask voters to compare two proposed budgets. Similarly, in a context like transportation policy for a city, a single alternative can be an entire transportation plan. In such examples, not only are there many alternatives, but it may be infeasible to expect voters to compare more than two alternatives at the same time. We stress that constant sample complexity is particularly important in settings where there may be a large number of possibly complex alternatives.

To evaluate the quality of our mechanisms, we adopt the implicit utilitarian perspective with metric constraints~\cite{implicitUtilitarian,YuCheng,metricSocialChoiceRandomDictator,metricSocialChoiceLowerBounds,deterministic,Feldman}. That is, we assume that agents have cardinal costs over alternatives, and these costs are constrained to be metric, but asking agents to work with or report cardinal costs is impractical or impossible. We want to design social choice mechanisms to minimize the total social cost, but our mechanism is constrained only to use ordinal queries, that is, those that can be answered given a total order over alternatives. We therefore measure the efficiency of a mechanism as its \textit{Distortion} (see Section~\ref{sec:formal}), the worst case approximation to the total social cost.


\section{Results} 

The starting point for our inquiry is the constant sample complexity \textit{Random Dictatorship} mechanism. The algorithm asks a single favorite query from an agent chosen uniformly at random, and has a tight Distortion bound of 3~\cite{metricSocialChoiceRandomDictator}. In this paper, we provide two new mechanisms (Random Referee and Random Oligarchy) that improve on this simple baseline in three different ways, outlined in each of our three technical sections. We hope that our work inspires future research on similarly lightweight mechanisms for social choice in large decision spaces.

\medskip
\noindent {\bf Random Referee: Comparison Queries and Squared Distortion.}
In Section~\ref{sec:bargain}, we show that one disadvantage of Random Dictatorship lies not in its Distortion, but in its variance. For randomized mechanisms, Distortion is measured as the expected approximation to the first moment of social cost. However, in many social choice problems, we might want a bound on the risk associated with a given mechanism. We capture this via \textit{Squared Distortion}, as suggested in~\cite{sequentialDeliberation}.  The Squared Distortion (see Definition~\ref{def:squaredDistortion} in Section~\ref{sec:formal}) is the expected approximation to the second moment of social cost. A mechanism with constant Squared Distortion has both constant Distortion and constant coefficient of variation of the Distortion.

We show that mechanisms using only top-$k$ queries (including Random Dictatorship) have Squared Distortion $\Omega(|\Se|)$.  This motivates us to expand our query model to incorporate information about the relative preferences of agents between alternatives, {\em i.e.} use comparison queries. We  define  a novel mechanism called \textit{Random Referee} (RR) that uses a random voter as a referee to compare the favorite alternatives of two other random voters (see Definition~\ref{def:RB} in Section~\ref{sec:bargain}). Our main result in Section~\ref{sec:bargain} is Theorem~\ref{theorem:squaredDistortionRandomBargain}: The Squared Distortion of RR is at most 21. This also immediately implies that the Distortion of RR is at most 4.583.

\medskip
\noindent {\bf Random Referee: Euclidean Plane.} In Section~\ref{sec:euclidean}, we show that top-$k$ only mechanisms (again, including Random Dictatorship) achieve their worst case Distortion even on benign metrics such as low dimensional Euclidean spaces.  We analyze a special case on the Euclidean plane and prove that the Distortion of Random Referee beats that of any top-$k$ only mechanism. While the improvement we prove in Section~\ref{sec:bargain} is quantitatively small, it is qualitatively interesting: we demonstrate that by using a {\em single} comparison query, Random Referee can exploit the structure of the metric space to improve Distortion, whereas Random Dictatorship or any other top-$k$ only mechanism cannot.  We conjecture this result extends to Euclidean spaces in any dimension, and present some evidence to support this conjecture in Section~\ref{sec:open}.

\medskip
\noindent {\bf Random Oligarchy: Favorite Only Mechanisms.} In Section~\ref{sec:oligarchy}, we consider mechanisms that are restricted to favorite queries and show that constant complexity mechanisms are nearly optimal.  We present a mechanism that uses only three favorite queries  that  has Distortion at most 3 for arbitrary $|\Se|$; however, it also has Distortion for small $|\Se|$ that improves upon the best known favorite only mechanism from~\cite{2-agree}  that uses at most $\min(|N|+1, |\Se|+1)$ favorite queries. Comparing with a lower bound for favorite only mechanisms, Random Oligarchy has nearly optimal distortion and constant sample complexity. Though this mechanism does not have constant Squared Distortion like Random Referee, we present it to demonstrate again the surprising power of constant sample complexity randomized social choice mechanisms in general, and of queries to just three voters in particular.

\medskip
\noindent {\bf Techniques.}
We use different techniques to prove our different positive results. The proof of Squared Distortion (Theorem~\ref{theorem:squaredDistortionRandomBargain} in Section~\ref{sec:bargain}) relies heavily on Lemma~\ref{lemma:bargainTechnical}, in which we prove (essentially) that Random Referee chooses a low social cost alternative as long as  at least two of the three agents chosen at random are near the social optimum. 

The proof of Distortion for Euclidean spaces (Theorem~\ref{theorem:ordinalrandomBargain} in Section~\ref{sec:euclidean}) is the most technical result. We show that we can upper bound the Distortion of a mechanism by the worst case ``pessimistic distortion,'' of just a constant size tuple of points, where the ``pessimistic distortion'' considers all permutations of the points as participating in Random Referee and allows OPT to choose the optimal point on just this tuple. This allows us to employ a computer assisted analysis by arguing that if a high Distortion instance exists, we can detect it as an instance with a small constant number of points on a sufficiently fine (but finite) grid in the Euclidean plane. This approach may be of independent interest for providing tighter Distortion bounds for mechanisms in specific structured metric spaces.



\section{Related Work}

\noindent {\bf Distortion of Randomized Social Choice Mechanisms in Metrics.} The Distortion of randomized social choice mechanisms in metrics has been studied in~\cite{implicitUtilitarian,metricSocialChoiceRandomDictator,metricSocialChoiceLowerBounds,2-agree}. Of particular interest to us are the Random Dictatorship mechanism that uses a single favorite query and the 2-Agree mechanism~\cite{2-agree} that uses at most most $\min(|N|+1, |\Se|+1)$ favorite queries. Random Dictatorship has an upper bound on Distortion of 3~\cite{metricSocialChoiceRandomDictator}, and 2-Agree provides a strong guarantee on Distortion when $|\Se|$ is small (better than Random Dictatorship for $|\Se| \leq 6$). There is ongoing work on analyzing the Distortion of randomized ordinal mechanisms for other classic optimization problems like graph optimization~\cite{graphDistortion} and facility location~\cite{facilityLocation}.

\medskip
\noindent {\bf Squared Distortion and Variance.} We are aware of two papers in mechanism design that consider the variance of mechanisms for facility location on the real line~\cite{variance} and kidney exchange~\cite{varianceKidney}. Our work is more related to the former, but is not restricted to the real line, and does not focus on characterizing the tradeoff between welfare and variance. Using Squared Distortion as a proxy for risk was introduced in~\cite{sequentialDeliberation} along with the sequential deliberation protocol. Unlike sequential deliberation, Random Referee makes a constant number of ordinal queries. The most important baseline for Squared Distortion is the deterministic Copeland rule, which has Distortion $5$~\cite{deterministic} and therefore Squared Distortion 25. However, Copeland requires the communication of $\Omega(|N||\Se|\log(|\Se|))$ bits~\cite{communicationComplexity}, essentially the entire preference profile. Our Random Referee mechanism has constant sample complexity, and has better bounds on Squared Distortion (21) and Distortion (4.583) than the Copeland mechanism.

\medskip
\noindent {\bf Communication Complexity.} For a survey on the complexity of eliciting ordinal preferences to implement social choice rules, we refer the interested reader to~\cite{computationalSocialChoice}. Of particular interest to us is~\cite{communicationComplexity}, in which the authors comprehensively characterize the \textit{communication complexity} (in terms of the number of bits communicated) of common deterministic voting rules. A favorite query requires $O(\log(|\Se|))$ bits of communication, so our mechanisms have constant sample complexity, but logarithmic communication complexity.~\cite{sequentialElimination} and~\cite{votingCommunicationComplexity} design social choice mechanisms with low communication complexity when there are a small number of voters, but potentially a large number of alternatives. All of our mechanisms have guarantees that are independent of the number of voters.

\medskip
\noindent {\bf Strategic Incentives.} We do not consider truthfulness in this paper, and we do not use the term mechanism to imply any such property. While strategic incentives are not the focus of this work, we note that any truthful mechanism must have Distortion at least 3~\cite{Feldman}. Random Dictatorship has a Distortion of 3, and is therefore in some sense optimal among exactly truthful mechanisms. Other works suggest that truthfulness is also incompatible with the weaker notion of Pareto efficiency in randomized social choice~\cite{PESP,azizPESP}. Still other authors have considered the problem of truthful welfare maximization under range voting~\cite{rangeVoting} and threshold voting~\cite{thresholdVoting}.

\section{Preliminaries} 
\label{sec:formal}
We have a set $N$ of agents (or voters) and a set $\Se$ of alternatives, from which we must choose a single outcome. For each agent $u \in N$ and alternative $a \in \Se$, there is some underlying dis-utility $d_u(a)$. Let $p_u = \mbox{argmin}_{a \in \Se} d_u(a)$, that is, $p_u$ is the most preferred alternative for agent $u$. Ordinal preferences are specified by a total order $\sigma_u$ consistent with these dis-utilities (i.e., an alternative is ranked above another only if it has lower dis-utility). A preference profile $\sigma^{(N)}$ specifies the ordinal preferences of all agents. A deterministic social choice rule is a function $f$ that maps a preference profile $\sigma^{(N)}$ to an alternative $a \in \Se$. A randomized social choice rule maps a preference profile $\sigma^{(N)}$ to a distribution over $\Se$. 

We consider mechanisms that implement a randomized social choice rule using a constant number of queries of two types. A \textit{top}-$k$ \textit{query} asks an agent $u \in N$ for the first $k$ preferred alternatives according to the order $\sigma_u$ (ties can be broken arbitrarily). We refer to a top-1 query as a \textit{favorite query}, that asks an agent $u \in N$ for her most preferred alternative $p_u$. A \textit{top-$k$ only mechanism} $f$ uses only top-$k$ queries, for some constant $k$ (constant with respect to $|N|$ and $|\Se|$). Most of our lower bounds or impossibilities will be for any top-$k$ only mechanism (for constant $k$), whereas our positive results will only need favorite and comparison queries. A \textit{comparison query} with alternatives $a \in \Se$ and $b \in \Se$ asks an agent $u \in N$ for $\mbox{argmin}_{x \in \{a,b\}} d_u(x)$.

We use the term mechanism to clarify that our algorithms are in a query model. However, it is important to note that mechanisms so defined are still randomized social choice rules in the formal sense as long as they do not make queries based on exogenous information (e.g., names of participants). Our mechanisms will in fact be randomized social choice rules, and thus can be appropriately compared to other such rules in the literature that do not explicitly use a query model. By using the term mechanism, we do not mean to imply any strategic properties.


\medskip
\noindent {\bf Distortion and Sample Complexity.} We measure the quality of an alternative $a \in \Se$ by its {\em social cost}, given by $ SC(a) = \sum_{u \in N} d_u(a)$. Let $a^* \in \Se$ be the minimizer of social cost. We define the commonly studied approximation factor called Distortion~\cite{distortion}, which measures the worst case approximation to the optimal social cost of a given mechanism. We use the expected social cost if $a$ is the outcome of a randomized mechanism, and we seek to minimize Distortion.

\begin{definition}
The \textbf{Distortion} of an alternative $a$ is $\mbox{Distortion}(a) = \frac{SC(a)}{SC(a^*)}.$ The Distortion of a social choice mechanism $f$ is $$ \mbox{Distortion}(f) = \sup_{\{d_u(a)\}} \mathbb{E}_{f \left(\sigma^{(N)} \right)} Distortion(a)$$ where $\sigma^{(N)}$ is a preference profile consistent with $\{d_u(a)\}$.
\end{definition}

We assume that $\Se$ is a set of points in a metric space such that dis-utility can be measured by the distance from an agent. Specifically, we assume there is a distance function $d: (N \cup \Se) \times ( N \cup \Se) \rightarrow \mathbb{R}_{\geq 0}$ satisfying the triangle inequality such that $d_u(a) = d(u, a)$. The metric assumption is common in the implicit utilitarian literature~\cite{metricSocialChoiceRandomDictator,metricSocialChoiceLowerBounds,sequentialDeliberation,2-agree,YuCheng,deterministic,Feldman}. It is also a natural assumption for capturing social choice problems for which there is a natural notion of distance between alternatives. For example, in our original motivating example of public budgets, there is are natural notions of distance between alternatives in terms of dollars.  

We do not assume access to $\sigma^{(N)}$ directly, which may be prohibitively difficult to elicit when there are many alternatives. Instead, we work with a query model. The queries are ordinal in the sense that they can be answered given only the information in $\sigma^{(N)}$. A mechanism $f$ has \textit{constant sample complexity} if there is an absolute constant $c$ such that for all $\Se$, $N$, and $\sigma^{(N)}$, $f$ can be implemented using at most $c$ queries. In this paper, we consider top-$k$ (and the special case of favorite) and comparison queries, and explore mechanisms with constant sample complexity.

\medskip
\noindent {\bf Squared Distortion.}  It is easy to see that randomization is necessary for constant sample complexity mechanisms to achieve constant Distortion. As $N$ grows large, any deterministic mechanism with constant sample complexity deterministically ignores (asks no queries of and receives no information from) an arbitrarily large fraction of $N$. An adversary can therefore place an alternative with 0 dis-utility for arbitrarily many agents; this gives a lower bound for Distortion approaching $N$ as $N$ becomes large.

This naturally leads us to ask: If we look at the distribution of outcomes produced by the mechanism, is this distribution well behaved?  Following~\cite{sequentialDeliberation}, we capture this notion via \textit{Squared Distortion}: essentially the approximation to the optimal second moment of social cost.
\begin{definition}
\label{def:squaredDistortion}
	The \textbf{Squared Distortion} of an alternative $a \in \Se$ is $Distortion^2(a) = \left(\frac{SC(a)}{SC(a^*)}\right)^2.$ The Distortion of a social choice mechanism $f$ is $$ Distortion^2(f) = \sup_{\{d_u(a)\}} \mathbb{E}_{f \left(\sigma^{(N)} \right)} Distortion^2(a)$$ where $\sigma^{(N)}$ is a preference profile consistent with $\{d_u(a)\}$.  
\end{definition} 

Note that a mechanism with constant Squared Distortion has both constant Distortion (by Jensen's inequality), and constant coefficient of variation.  One way to interpret having constant Squared Distortion is that the deviation of the social cost around its mean falls off quadratically instead of linearly, which means that the social cost of such a mechanism is well concentrated around its mean value. We note that this interpretation gives randomized social choice mechanisms with constant Squared Distortion an interesting application in candidate selection. In particular, one can imagine running such a mechanism (like our Random Referee) to generate a candidate list on which one can use a deterministic (but potentially complex) voting mechanism like Copeland. 

A related approach to understanding the distributional properties of a randomized mechanism is to characterize the tradeoff between Distortion (approximation to the first moment) and the variance of randomized mechanisms~\cite{variance}. Our specific goal in this paper is to develop a mechanism that achieves constant Distortion \textit{and} constant variance, and this combination is captured by having constant Squared Distortion. We leave characterizing the exact tradeoff between the quantities as an interesting open direction.

\section{Random Referee and Squared Distortion}
\label{sec:bargain}


Our first result is Theorem~\ref{theorem:squaredDIstortionLowerBound}: Mechanisms that only elicit top-$k$ preferences, for constant $k$, must necessarily have Squared Distortion that grows linearly in the size of the instance. This holds even for mechanisms that elicit the top-$k$ preferences of all of the voters, mechanisms which would not have constant sample complexity. The proof is in the appendix.

\begin{theorem}
\label{theorem:squaredDIstortionLowerBound}
	Any top-$k$ only social choice mechanism has Squared Distortion $\Omega(|\Se|)$.
\end{theorem}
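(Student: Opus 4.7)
The plan is to use a Yao-style averaging argument. First I would construct a parametrized family of instances $\{I_j\}_{j \in [m]}$ that are pairwise indistinguishable to any top-$k$ query (every voter's reported top-$k$ list is identical across the family), yet differ in which alternative is socially optimal. Since any top-$k$ only mechanism's output distribution depends solely on the top-$k$ responses, it must be the same fixed distribution $p$ in every $I_j$. Hence averaging over a uniformly random $j$ effectively smears the mechanism's probability mass across the $m$ possible optima, and coupling this with a large penalty for a wrong choice will yield squared distortion $\Omega(m) = \Omega(|\Se|)$.

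For the construction, I would place $m$ voters at the vertices of a regular simplex with pairwise distance $D$, and for each voter add $k$ \emph{local} alternatives at distance $\epsilon$ from that voter (these populate the voter's top-$k$). Then I would add $m$ \emph{candidate} alternatives $c_1,\dots,c_m$: in instance $I_j$, alternative $c_j$ sits at the simplex centroid at distance $R$ from every voter, while the other $c_\ell$ sit far away at distance $R' \gg D$ from every voter. Choosing $\epsilon < R \ll D \ll R'$ ensures (i) $c_j$ is the unique social optimum because $mR < mD$, (ii) every voter's $k$ nearest alternatives are her $k$ local alternatives at distance $\epsilon$, so top-$k$ responses are identical across all $I_j$, and (iii) all decoy candidates are much worse than the optimum. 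The total number of alternatives is $(k+1)m = \Theta(|\Se|)$ for constant $k$.

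For the analysis, let $p$ be the (fixed) output distribution of the mechanism and write $P = \sum_\ell p(c_\ell)$. Averaging squared distortion over a uniform $j \in [m]$ gives
\begin{equation*}
\mathbb{E}_j\bigl[\mathrm{Distortion}^2\bigr] \;=\; (1-P)\left(\frac{D}{R}\right)^{\!2} + \frac{P}{m} + \frac{(m-1)P}{m}\left(\frac{R'}{R}\right)^{\!2}.
\end{equation*}
Choosing $D/R = \sqrt{m}$ and $R'/R = \sqrt{2m}$ makes the coefficient of $P$ nonnegative, so the mechanism's best response is $P = 0$, giving expected squared distortion at least $(D/R)^2 = m$. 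Hence some instance in the family achieves squared distortion at least $m = \Omega(|\Se|)$.

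The main obstacle is verifying that the four distance requirements $\epsilon < R \ll D \ll R'$ can be simultaneously realized in a single metric (so that the adversarial instance is valid) and that the top-$k$ responses really leak no information about which candidate is central. The first is straightforward via an explicit Euclidean embedding (simplex vertices for the voters; the centroid plus far-away perpendicular points for the candidates; small perturbations of each vertex for the locals), and the second reduces to the strict inequality $\epsilon < R$, which keeps every $c_\ell$ strictly outside every voter's top-$k$ regardless of tie-breaking.
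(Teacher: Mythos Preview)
Your averaging scheme is sound, but the construction cannot satisfy the distance constraints you need. You require $D/R = \sqrt{m}$ so that the local alternatives have squared distortion $m$, yet you place $c_j$ at distance $R$ from every voter while the voters are pairwise at distance $D$. The triangle inequality then forces $D = d(v_i,v_{i'}) \le d(v_i,c_j) + d(c_j,v_{i'}) = 2R$, so $D/R \le 2$ in \emph{any} metric (in your Euclidean simplex the circumradius is in fact $R = D\sqrt{(m-1)/(2m)} \approx D/\sqrt{2}$). Consequently every local alternative has squared distortion at most $4$, and the mechanism that sets $P=0$ (never outputs any $c_\ell$) attains expected squared distortion $O(1)$ on your family. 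The family therefore does not certify an $\Omega(|\Se|)$ lower bound.

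The underlying issue is that your construction gives the mechanism a ``safe'' region---the locals---whose distortion is bounded independently of $m$. The paper avoids this by dispensing with the extra candidate alternatives entirely: the only alternatives are the $k|N|$ locals themselves, so the mechanism \emph{must} place all its mass on locals. It then uses a direct adversary (rather than Yao averaging): since the mechanism's distribution over locals is determined by the fixed top-$k$ profile, some voter $u^*$'s cluster $S_{u^*}$ receives mass at least $1/|N|$, and the adversary chooses the metric that isolates $S_{u^*}$ at distance $1$ from a tight clique containing everyone else. Now the chosen alternative has distortion $\Theta(|N|)$ with probability $\ge 1/|N|$, giving squared distortion $\Omega(|N|) = \Omega(|\Se|)$. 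Your approach can be repaired along these lines by dropping the $c_\ell$'s and letting instance $I_j$ isolate voter $j$'s locals, but at that point it coincides with the paper's argument.
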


The problem with top-$k$ only mechanisms, exploited in the proof of Theorem~\ref{theorem:squaredDIstortionLowerBound}, is that they treat agents as indifferent between their $(k+1)$st favorite alternative and their least favorite alternative. This motivates the expansion of our query model to include \textit{comparison queries}. Recall that a comparison query with alternatives $a \in \Se$ and $b \in \Se$ asks an agent $u \in N$ for $\mbox{argmin}_{x \in \{a,b\}} d_u(x)$. We use a single comparison query in our Random Referee mechanism. 

\begin{definition}
\label{def:RB}
	The \textbf{Random Referee (RR)} mechanism samples three agents $u,v,w \in N$ independently and uniformly at random with replacement. $u$ and $v$ are asked for their favorite feasible alternatives $p_u$ and $p_v$ in $\Se$, and then $w$ is asked to compare $p_u$ and $p_v$. Output whichever of the two alternatives $w$ prefers.
\end{definition}

\medskip
\noindent \textbf{Upper Bound for Random Referee.} Our main result in this section is Theorem~\ref{theorem:squaredDistortionRandomBargain}: The Squared Distortion of RR is at most 21. This also implies that the Distortion of RR is at most 4.583. As far as we are aware, no randomized mechanism has lower Squared Distortion in general.

We will need the following technical lemma. Let $a^* \in \Se$ be the social cost minimizer. The basic intuition behind Random Referee is that it will choose a low social cost alternative as long as any two out of the three agents selected are near the optimal alternative $a^*$. Lemma~\ref{lemma:bargainTechnical} makes this intuition formal. For convenience, let $Z_u = d(u, a^*)$ for $u \in N$, i.e., the dis-utility of $a^*$ for agent $u$. Let $C(\{u, v\}, w) = \mbox{argmin}_{y \in \{p_u, p_v\}} d(w, y)$, that is, the alternative Random Referee outputs when $u$ and $v$ are selected to propose alternatives and $w$ chooses between them. 

\begin{lemma} 
\label{lemma:bargainTechnical}
For all $u, v, w, x \in N$,
	$$d(C(\{u,v\}, w), x) \leq Z_x + 2 \min \begin{cases} \max(Z_u, Z_v), \\ Z_w + \min(Z_u, Z_v) \end{cases}.$$
\end{lemma}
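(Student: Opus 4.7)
The plan is to peel off $Z_x$ using one triangle inequality, reducing the claim to a single bound on $d(C(\{u,v\},w), a^*)$, and then prove the two halves of the min separately using the two structural facts we have: (i) $p_u$ is $u$'s favorite, so $d(u,p_u) \le d(u,a^*) = Z_u$, and similarly for $v$; (ii) the referee $w$ picks whichever of $p_u, p_v$ is closer to $w$, so $d(w, C(\{u,v\},w)) \le \min(d(w,p_u), d(w,p_v))$.

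For the initial reduction, the triangle inequality gives
\[
d(C(\{u,v\},w), x) \le d(C(\{u,v\},w), a^*) + d(a^*, x) = d(C(\{u,v\},w), a^*) + Z_x,
\]
so it suffices to show $d(C(\{u,v\},w), a^*) \le 2\min(\max(Z_u,Z_v),\ Z_w + \min(Z_u,Z_v))$. For the $\max$ bound, I would pass through whichever of $u,v$ is the "proposer'' of the chosen alternative: for either proposer $y \in \{u,v\}$, the favorite property combined with one triangle inequality gives $d(p_y, a^*) \le d(p_y, y) + d(y, a^*) \le 2Z_y$. Since $C(\{u,v\},w) \in \{p_u,p_v\}$, this yields $d(C(\{u,v\},w), a^*) \le \max(2Z_u, 2Z_v) = 2\max(Z_u,Z_v)$.

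For the second bound I would route through the referee $w$. Using the referee property and the estimate $d(w, p_y) \le Z_w + d(a^*, p_y) \le Z_w + 2Z_y$ (by the same favorite-based computation as above), we get
\[
d(w, C(\{u,v\},w)) \le \min(d(w,p_u), d(w,p_v)) \le Z_w + 2\min(Z_u, Z_v).
\]
Then one more triangle inequality $d(C(\{u,v\},w), a^*) \le d(C(\{u,v\},w), w) + Z_w$ contributes a further $Z_w$, producing $2Z_w + 2\min(Z_u,Z_v) = 2(Z_w + \min(Z_u,Z_v))$. Combining the two bounds inside a min and adding back the $Z_x$ from the initial reduction completes the argument.

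I do not expect a real obstacle here, since every step is a triangle inequality or a direct use of the favorite/referee definitions; the only thing to be careful about is not losing a factor by mistakenly routing both the proposer and the referee detours through the same intermediate point. The $\max$-bound uses only the proposers' favorite property (so it tolerates a bad referee $w$, at the cost of paying the worse of $Z_u, Z_v$), whereas the $Z_w + \min$ bound trades a $Z_w$ term for being able to pay only the better of the two proposers; keeping these two different routings straight is the only bookkeeping issue.
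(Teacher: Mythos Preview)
Your proposal is correct and follows essentially the same approach as the paper: the initial $Z_x$ reduction, the $\max$-bound via $d(a^*,p_y)\le 2Z_y$, and the $Z_w+\min$-bound by routing through $w$ are all identical in substance. The only cosmetic difference is that for the second bound the paper splits into two cases according to which of $p_u,p_v$ the referee selects, whereas you bypass the case split by invoking $d(w,C(\{u,v\},w))\le\min(d(w,p_u),d(w,p_v))$ directly; the resulting inequality chain is the same.
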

\begin{proof}
One should think of $u, v \in N$ as the agents drawn to present their favorite alternative, $w$ as the referee, and $x$ as the agent from whom we measure the dis-utility of the resulting outcome. The triangle inequality implies that $$d(C(\{u,v\}, w), x) \leq Z_x + d(a^*, C(\{u,v\}, w)).$$ We give two separate upper bounds on $d(a^*, C(\{u,v\}, w))$, thus yielding the min. We will frequently use the fact that for all $u \in N$, $d(u, a^*) \leq d(p_u, u) + d(u, a^*) \leq 2 Z_u$. This follows from the the definition of $p_u$: the favorite alternative of $u$, and thus no greater in distance from $u$ than $a^*$. The first bound is straightforward: $C(\{u,v\}, w) \in \{p_u, p_v\}$ by definition of Random Referee, so 
\begin{equation*}
\begin{split}
	d(a^*, C(\{u,v\}, w)) & \leq \max \left(d(a^*, p_u), d(a^*,p_v) \right) \\
	& \leq 2 \max \left( Z_u, Z_v \right)
\end{split}
\end{equation*}	
In a sense, this bound concerns the situation where both $u$ and $v$ are near $a^*$, but $w$ is far away from $a^*$. Now we argue for the second bound. Suppose without loss of generality that $Z_u \leq Z_v$. $w$ chooses either $p_u$ or $p_v$. If $w$ chooses $p_u$, then $d(a^*, C(\{u,v\}, w)) = d(a^*, p_u) \leq 2 d(a^*, u) = 2 \min(Z_u, Z_v)$, and so the bound holds. If $w$ chooses $p_v$, then $d(w, p_v) \leq d(w, p_u)$, which implies 
\begin{equation*} 
\begin{split}
d(a^*, C(\{u,v\}, w))  & = d(a^*, p_v) \\
& \leq d(a^*, w) + d(w, p_u) \\
& \leq Z_w + d(w, a^*) + d(a^*, p_u) \\
& \leq 2 Z_w + 2Z_u \\
& = 2 \left( Z_w + \min(Z_u, Z_v)\right)
\end{split}
\end{equation*}
and again the bound holds. In either case, $d(a^*, C(\{u,v\}, w))$ is at most $2 \left(Z_w + \min(Z_u, Z_v) \right)$. Intuitively, this bound concerns the situation where $w$ and $u$ are close to $a^*$, but $v$ is far away from $a^*$. Taking the better of this bound with the $d(a^*, C(\{u,v\}, w)) \leq 2 \max(Z_u, Z_v)$ bound through the min and factoring out the 2 yields the lemma. 
\end{proof}

Using Lemma~\ref{lemma:bargainTechnical}, we can upper bound the Squared Distortion of Random Referee by 21. This is in contrast to the $\Omega(|\Se|)$ Squared Distortion of any favorite only mechanism (see Theorem~\ref{theorem:squaredDIstortionLowerBound}).

\begin{theorem}
\label{theorem:squaredDistortionRandomBargain}
The Squared Distortion of Random Referee is at most 21.
\end{theorem}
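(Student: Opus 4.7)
The plan is to invoke Lemma~\ref{lemma:bargainTechnical} pointwise and then reduce the Squared Distortion bound to an exercise in bounding expectations of functions of three i.i.d.\ samples $Z_u, Z_v, Z_w$ drawn from the distribution $\{Z_x\}_{x \in N}$ (with $x$ uniform). Writing $\mu := \frac{1}{|N|} \sum_x Z_x = \mathrm{OPT}/|N|$, and summing the lemma's inequality over $x \in N$ with $a = C(\{u,v\},w)$ fixed, I get
\[
SC(a) \;\le\; SC(a^*) + 2|N| \cdot T, \qquad \text{where } T := \min\bigl\{\max(Z_u, Z_v),\; Z_w + \min(Z_u, Z_v)\bigr\}.
\]
Dividing by $SC(a^*) = |N|\mu$ and squaring yields
\[
\mathrm{Distortion}^2(a) \;\le\; \Bigl(1 + \tfrac{2T}{\mu}\Bigr)^{2} \;=\; 1 + \tfrac{4T}{\mu} + \tfrac{4T^2}{\mu^2}.
\]
So the task reduces to upper bounding $\mathbb{E}[T]$ and $\mathbb{E}[T^2]$ in terms of $\mu$, where the expectation is over the independent uniform draws of $u,v,w$ from $N$.

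For $\mathbb{E}[T]$ I simply drop the inner minimum and use $T \le \max(Z_u, Z_v) \le Z_u + Z_v$, which gives $\mathbb{E}[T] \le 2\mu$. The more delicate step, and the main obstacle, is $\mathbb{E}[T^2]$: here I cannot afford either of the two bounds in the lemma individually, so I use the AM--GM-type inequality $\min(A,B)^2 \le AB$ to couple them. With $A = \max(Z_u,Z_v)$ and $B = Z_w + \min(Z_u,Z_v)$, and using that $w$ is independent of $(u,v)$ with $\mathbb{E}[Z_w] = \mu$,
\[
\mathbb{E}[T^2] \;\le\; \mathbb{E}[AB] \;=\; \mu \cdot \mathbb{E}[\max(Z_u,Z_v)] + \mathbb{E}\bigl[\max(Z_u,Z_v)\min(Z_u,Z_v)\bigr].
\]
The second expectation equals $\mathbb{E}[Z_u Z_v] = \mu^2$ by independence, and $\mathbb{E}[\max(Z_u,Z_v)] \le \mathbb{E}[Z_u + Z_v] = 2\mu$, so $\mathbb{E}[T^2] \le 2\mu^2 + \mu^2 = 3\mu^2$.

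Plugging back in,
\[
\mathbb{E}\bigl[\mathrm{Distortion}^2(a)\bigr] \;\le\; 1 + 4 \cdot \tfrac{2\mu}{\mu} + 4 \cdot \tfrac{3\mu^2}{\mu^2} \;=\; 1 + 8 + 12 \;=\; 21,
\]
giving the claimed bound. I expect the only subtle choice to be the use of $\min(A,B)^2 \le AB$: neither $\min(A,B) \le A$ nor $\min(A,B) \le B$ alone suffices, because the former gives only a second-moment bound on $\max(Z_u,Z_v)$ (which is $\Theta(\mu^2)$ only when the $Z_x$ are well concentrated), while the product form uses the independence of $w$ to shave exactly the factor needed to reach 21. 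The Distortion bound of $\sqrt{21} < 4.583$ then follows from Jensen's inequality.
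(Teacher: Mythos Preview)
Your proof is correct and follows essentially the same approach as the paper: both apply Lemma~\ref{lemma:bargainTechnical}, expand the square into a linear term and a quadratic term in $T$ (the paper's $\alpha_{uvw}$), bound the linear term via $T \le Z_u + Z_v$, and bound the quadratic term via $(\min(A,B))^2 \le AB$ together with $\max\cdot\min = Z_u Z_v$ and $\max \le Z_u + Z_v$. The only difference is cosmetic---you phrase the argument with expectations over i.i.d.\ uniform draws and the normalization $\mu = \mathrm{OPT}/|N|$, whereas the paper writes out the corresponding triple sums explicitly; the $8$ and $12$ contributions match the paper's $T_2$ and $T_1$ terms exactly.
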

\begin{proof}
	Let $OPT$ be the optimal squared social cost, that is, the squared social cost of $a^*$, where $a^* \in \Se$ is the social cost minimizer. Recall that $Z_u = d(u, a^*)$. Then $$OPT = \left( \sum_{u \in N} Z_u \right)^2 = \sum_{u \in N} \sum_{v \in N} Z_u Z_v.$$ Let $ALG$ be the expected squared social cost of Random Referee. The expectation can be written out as $$ALG = \frac{1}{|N|^3} \sum_{u, v, w \in N} \left( \sum_{x \in N} d(C(\{u,v\}, w), x) \right)^2 .$$ Let $\alpha_{uvw} = \min \left( \max(Z_u, Z_v), Z_w + \min(Z_u, Z_v) \right).$ We apply Lemma~\ref{lemma:bargainTechnical} and simplify. 
\begin{equation*}
	ALG  \leq \frac{1}{|N|^3} \sum_{u, v, w \in N} \left( \sum_{x \in N} (Z_x + 2\alpha_{uvw}) \right)^2
\end{equation*}
Noting that $\alpha_{uvw}$ does not depend on $x$, we can expand the square and simplify to find
\begin{equation*} 	
	ALG \leq OPT + \frac{4}{|N|^3} \sum_{u, v, w \in N} \left( \alpha_{uvw}^2 |N|^2 +  \alpha_{uvw} |N| \sum_{x \in N} Z_x \right). 
\end{equation*}
Now we sum each term separately. Let 
\begin{equation*}
\begin{split}
	& T_1 = \frac{4}{|N|^3} \sum_{u,v,w \in N} \alpha_{uvw}^2 |N|^2 \\
	& T_2 = \frac{4}{|N|^3} \sum_{u,v,w \in N} \alpha_{uvw} |N| \sum_{x \in N} Z_x
\end{split}
\end{equation*}
We will use the following basic facts: for any real numbers $a, b \geq 0, (\min(a,b))^2 \leq a \cdot b$, $\max(a,b) \leq a+b$, and $\min(a,b) \cdot \max(a,b) = a \cdot b$. 
\begin{equation*}
\begin{split}
	T_1 & \leq \frac{4}{|N|} \sum_{u, v, w \in N} \left( \max(Z_u, Z_v\right) \left( Z_w + \min(Z_u, Z_v) \right) \\
	 & \leq \frac{4}{|N|} \sum_{u, v, w \in N} \left( (Z_u + Z_v ) Z_w + Z_u Z_v \right) \\
	  & = 12 \, OPT
\end{split}
\end{equation*}

Similarly, we analyze the second term using the fact that $\alpha_{uvw} \leq Z_u + Z_v$.
\begin{equation*}
\begin{split}
	T_2 & = \frac{4}{|N|^3} \sum_{u, v, w \in N} \alpha_{uvw} |N| \sum_{x \in N} Z_x \\
	& \leq \frac{4}{|N|^2} \sum_{u, v, w \in N} \left( Z_u + Z_v \right) \sum_{x \in N} Z_x \\
	& = 8 \, OPT
\end{split}
\end{equation*}

Adding together all of the terms, $ALG \leq 21 \, OPT$. 
\end{proof}

This immediately yields the root mean square Distortion bound via Jensen's inequality. 

\begin{corollary}
	The Distortion of Random Referee is at most $\sqrt{21} \approx 4.583$.
\end{corollary}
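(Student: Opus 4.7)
The plan is to obtain the corollary as a direct consequence of Theorem~\ref{theorem:squaredDistortionRandomBargain} via Jensen's inequality applied to the convex function $x \mapsto x^2$. The only content beyond invoking Jensen is the bookkeeping of the $\sup$ over instances in the definitions of Distortion and Squared Distortion, which commutes with the square root since that function is monotone increasing on $\mathbb{R}_{\geq 0}$.

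Concretely, first I would fix an arbitrary instance $\{d_u(a)\}$ with consistent preference profile $\sigma^{(N)}$, and let $a$ denote the random output of Random Referee on that instance, so that $\mathrm{Distortion}(a) = SC(a)/SC(a^*)$ is a nonnegative random variable. Jensen's inequality for the convex map $x \mapsto x^2$ yields
\begin{equation*}
\bigl(\mathbb{E}\,\mathrm{Distortion}(a)\bigr)^2 \;\leq\; \mathbb{E}\,\mathrm{Distortion}(a)^2,
\end{equation*}
and since both sides are nonnegative, taking square roots gives $\mathbb{E}\,\mathrm{Distortion}(a) \leq \sqrt{\mathbb{E}\,\mathrm{Distortion}^2(a)}$.

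Next I would take the supremum over all instances $\{d_u(a)\}$. Because $\sqrt{\cdot}$ is monotone,
\begin{equation*}
\sup_{\{d_u(a)\}} \mathbb{E}\,\mathrm{Distortion}(a) \;\leq\; \sup_{\{d_u(a)\}} \sqrt{\mathbb{E}\,\mathrm{Distortion}^2(a)} \;=\; \sqrt{\,\sup_{\{d_u(a)\}} \mathbb{E}\,\mathrm{Distortion}^2(a)\,},
\end{equation*}
i.e., $\mathrm{Distortion}(\mathrm{RR}) \leq \sqrt{\mathrm{Distortion}^2(\mathrm{RR})}$. Combining this with the bound $\mathrm{Distortion}^2(\mathrm{RR}) \leq 21$ established in Theorem~\ref{theorem:squaredDistortionRandomBargain} immediately gives $\mathrm{Distortion}(\mathrm{RR}) \leq \sqrt{21} \approx 4.583$.

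There is no real obstacle here: the corollary is a two-line consequence of the theorem. The only minor point to be careful about is the interchange of $\sup$ and $\sqrt{\cdot}$, which is justified by monotonicity rather than any deeper argument, and the fact that Jensen applies per-instance (where the randomness is only that internal to Random Referee), so no additional assumptions on $\{d_u(a)\}$ are needed.
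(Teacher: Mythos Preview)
Your proposal is correct and matches the paper's approach exactly: the paper simply states that the corollary ``immediately yields the root mean square Distortion bound via Jensen's inequality,'' and you have spelled out precisely that argument, including the per-instance application of Jensen and the monotonicity step for interchanging $\sup$ with $\sqrt{\cdot}$.
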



\section{Distortion of Random Referee on the Euclidean Plane}
\label{sec:euclidean}
Though the upper bound on the Distortion of Random Referee is slightly worse than that of Random Dictatorship, we now show another advantage of using a comparison query: Such mechanisms can exploit structure in specific metric spaces that favorite-only mechanisms cannot. 
In other words, we show that the Distortion of Random Referee improves significantly for more structured metric spaces, while top-$k$ only mechanisms do not share this property. 

We examine the Distortion of Random Referee on a specific canonical metric of interest: the Euclidean plane when $d(u, p_u) = 0$ for every $u \in N$. The second assumption, functionally equivalent to assuming $N \subseteq \Se$, simplifies our analysis considerably, and corresponds to the 0-decisive case in~\cite{metricSocialChoiceRandomDictator,2-agree}. We note that in examples where $|\Se|$ is very large, the assumption becomes more innocuous. If we consider our opening example of public budgets, the assumption is something like this: every agent is allowed to propose their absolute favorite over all budgets, and we assume that this budget $p_u$ has dis-utility of 0. We consider the Euclidean plane because the problem of minimizing distortion on the real line can be solved exactly~\cite{metricSocialChoiceRandomDictator}, whereas we are unaware of any results for the Euclidean plane that are stronger than those for general metric spaces. 




\medskip
\noindent \textbf{Lower Bounds for Distortion in the Restricted Model.}
We begin by giving lower bounds to demonstrate that our simplifying assumptions still result in a nontrivial problem in two senses: (1) the Distortion of randomized social choice mechanisms are still bounded away from 1 and (2) any top-$k$ only mechanism (one that elicits the top $k$ preferred alternatives of agents) for constant $k$ has Distortion at least 2 as $|N|$ and $|\Se|$ become large. The proofs of the lower bounds are in the appendix.

\begin{theorem}
\label{theorem:lowerBoundGeneral}
    The Distortion of a randomized social choice mechanism is at least 1.2 generally, and at least 1.118 for the Euclidean plane, even when $d(u, p_u) = 0$ for every $u \in N$.
\end{theorem}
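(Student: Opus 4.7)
The plan is to apply Yao's principle: I construct an ordinal preference profile $\sigma$ together with two cardinal instances $I_1, I_2$, both consistent with $\sigma$, both satisfying $d(u, p_u) = 0$ (and in the Euclidean case, both realizable as point configurations in $\mathbb{R}^2$), such that the optimal alternative in $I_1$ differs from that in $I_2$. Since any randomized mechanism sees only the ordinal profile, it produces a single output distribution $p \in \Delta(\Se)$ that must be used in both instances. The expected distortion in each $I_i$ is an affine function of $p$, so the max over $\{I_1, I_2\}$ of these distortions, minimized over $p$, is the value of a two-instance LP; every mechanism's Distortion is at least this LP value.

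For the general metric bound $1.2$, I would try small instances with three alternatives $A, B, C$ and a handful of voters placed at them (respecting 0-decisiveness). Fixing $d(A, B) = 1$ and letting $d(A, C)$, $d(B, C)$ vary subject to the triangle inequality and consistency with $\sigma$, there should be a parameter range in which $C$ is optimal for some distances (when $C$ is close to both $A$ and $B$) while $A$ or $B$ is optimal for others. Evaluating the distortion at the two extreme realizations of the range and solving the resulting 2-constraint LP over the output probabilities should yield $6/5$. For the Euclidean bound $\approx 1.118$, the same construction is performed with the alternatives embedded in $\mathbb{R}^2$, e.g.\ $A = (0,0)$, $B = (1,0)$, $C = (1/2, y)$. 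The planar embedding imposes $d(A, C) = d(B, C) = \sqrt{1/4 + y^2} \geq 1/2$, a stricter constraint than the triangle inequality alone, which reduces the adversary's flexibility relative to the general metric case and thus yields a weaker but still nontrivial bound.

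The main obstacle is identifying the exact profile and parameterization (voter counts, positions of alternatives, tie-breaking conventions used to resolve equidistances) that realizes the claimed constants. While the Yao framework reduces the problem to a small LP, the specific instance family must be tuned carefully so that both extremal metrics induce the same ordinal profile; in the Euclidean case, one additionally has to verify that the chosen realizations remain embeddable in $\mathbb{R}^2$, which rules out some adversarial realizations allowed in general metrics and accounts for the gap between the two bounds. Once the right instance is in hand, the remaining arithmetic — writing the two affine distortion functions and computing the minimax of their maximum — is routine.
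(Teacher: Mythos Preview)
Your overall plan---fix one ordinal profile, exhibit two metric realizations consistent with it, and solve the resulting two-constraint minimax over the mechanism's output distribution---is exactly the paper's approach. The difference is in the instance. The paper uses \emph{four} alternatives arranged as a three-spoke star: a central alternative $B$ and three leaves $R$, with one voter at each leaf and no voter at $B$. All leaf--center distances are fixed at $1$ while only the leaf--leaf distances vary ($2$ in one realization, $1+\epsilon$ in the other), so every voter's ranking (own leaf $\succ B \succ$ other leaves) is unchanged, yet the social optimum flips between $B$ and a leaf. The three-fold leaf symmetry collapses the mechanism's choice to $\Pr(B)$ versus $\Pr(R)$, and equalizing the two resulting distortions gives $6/5$. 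For the Euclidean bound the paper places the three leaves on a unit circle about $B$, at $120^{\circ}$ versus $60^{\circ}$ spacing, and reruns the same computation to get $\approx 1.118$.

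Your suggested three-alternative instance with $A=(0,0)$, $B=(1,0)$, and the ``center'' $C=(1/2,y)$ does not reach $6/5$. If the voters sit only at $A$ and $B$, the triangle inequality gives $SC(C)=d(A,C)+d(B,C)\geq d(A,B)=SC(A)=SC(B)$, so outputting $A$ has distortion $1$ in every realization. If instead you place a voter at each of $A,B,C$, then whether $C$ is socially optimal is governed by whether $\sqrt{1/4+y^2}<1$, but this is the same threshold at which each outer voter's ranking of $B$ versus $C$ flips---so you cannot get both a ``$C$ good'' and a ``$C$ bad'' realization under one ordinal profile. The extra spoke in the paper's star (three voter-occupied leaves around a voter-free hub) is what makes the hub strictly beat every leaf in one realization while remaining ordinally second for everyone.
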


\begin{theorem}
\label{theorem:lowerBoundDistortionEuclidean}
	The Distortion of any top-$k$ only mechanism goes to $2$ as $|\Se|$ and $|N|$ become large, even on the Euclidean plane when $d(u, p_u) = 0$ for every $u \in N$.
\end{theorem}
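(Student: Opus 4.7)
The plan is to apply a Yao-style averaging argument over a family of $n$ instances that are indistinguishable to any top-$k$ only mechanism but have widely different optima. Fix a large $n$ and partition an alternative set of size $nk$ into $n$ disjoint groups $S_1, \ldots, S_n$, each of size exactly $k$. In instance $I_i$, I place agent $u_i$ at $L = (0,0)$ and every other agent $u_j$ at $R = (1,0)$; the alternatives in $S_i$ are placed at $L$ and those in $S_j$ for $j \neq i$ are placed at $R$. I choose the tiebreaking inside each $\sigma_{u_j}$ so that the alternatives of $S_j$ rank above all other alternatives at the same distance from $u_j$.

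With this tiebreaking, agent $u_j$'s top-$k$ list is exactly $S_j$ regardless of which $I_i$ we are in: if $j = i$, then $u_j = u_i$ is at $L$ and $S_j = S_i$ consists of the only $k$ co-located alternatives; if $j \neq i$, then $u_j$ is at $R$, all $R$-alternatives lie at distance $0$, and $S_j$ is pulled to the top by tiebreaking. Hence every top-$k$ query has the same response in every $I_i$, so any top-$k$ only mechanism---even an adaptive one---receives identical transcripts and therefore outputs the same distribution $\pi$ over $\Se$ in all $n$ instances. Meanwhile the optimum in $I_i$ is any alternative at $R$ with social cost $1$, while any alternative at $L$ (namely anything in $S_i$) has social cost $n-1$.

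A fixed alternative $a \in S_{j(a)}$ therefore has distortion $n-1$ in the single instance $I_{j(a)}$ and distortion $1$ in each of the remaining $n-1$ instances. Averaging over $i$ uniformly in $\{1, \ldots, n\}$ gives
\[
\frac{1}{n}\sum_{i=1}^{n} \mathbb{E}_{a \sim \pi}\bigl[\mathrm{Distortion}(a, I_i)\bigr] = \frac{(n-1) + (n-1)}{n} = 2 - \frac{2}{n},
\]
so some instance $I_{i^*}$ has expected distortion at least $2 - 2/n$, which tends to $2$ as $n, |\Se| \to \infty$. The construction clearly lives in the Euclidean plane and satisfies $d(u, p_u) = 0$ for every agent. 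The only subtle step is arranging the tiebreaking so that the ``agent $\mapsto$ top-$k$ list'' mapping is identical across all $I_i$; this is where adaptivity could have been a problem, but since query responses depend only on the queried agent's own preferences (which are the same across instances), adaptive query strategies also yield identical transcripts, and the argument goes through uniformly.
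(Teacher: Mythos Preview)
Your proof is correct and follows essentially the same approach as the paper: both construct a symmetric family in which each agent $u_j$'s top-$k$ list is a distinct block $S_j$, so a top-$k$ only mechanism cannot tell which agent is the isolated one; the paper then argues by pigeonhole that some $S_{u^*}$ receives probability at least $1/|N|$ and places that $u^*$ far from the others, while you phrase the same step as a Yao average over the $n$ symmetric instances. Your use of exact co-location plus explicit tiebreaking satisfies $d(u,p_u)=0$ on the nose, whereas the paper uses $\epsilon$-separated alternatives and a small-diameter cluster and takes limits, but the two arguments are otherwise identical.
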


\medskip
\noindent \textbf{Upper Bound for Distortion of Random Referee in the Restricted Model.}
Our positive result in this section demonstrates that a single comparison query is sufficient to construct a mechanism (Random Referee) with Distortion bounded below $1.97$ for arbitrary $|\Se|$ and $|N|$. We note that the bound in the theorem seems very slack; our goal is simply to show that using a comparison query provably decreases Distortion. We conjecture the actual bound is below 1.75 based on computer assisted search, but leave proving this stronger bound as an interesting open question.

\begin{theorem}
\label{theorem:ordinalrandomBargain}
	The worst case distortion of Random Referee is less than 1.97 when $d(u, p_u) = 0$ for every $u \in N$, $\mathcal{S} \subseteq \mathbb{R}^2$, and for $x,y \in \mathbb{R}^2$ $d(x,y) = \|x-y\|_2$.
\end{theorem}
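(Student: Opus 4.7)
The plan is to reduce the continuous worst-case analysis over arbitrary Euclidean instances to a finite computer search over small constant-size configurations, using scale invariance, a pessimistic-distortion reduction, and Lipschitz continuity. First I would exploit the scale and translation invariance of Distortion to normalize so that $a^* = 0$ and $\sum_{x \in N} \|x\|_2 = 1$. Since $d(u, p_u) = 0$, each agent $u$ is identified with $p_u \in \mathbb{R}^2$, so the mechanism depends only on the empirical distribution $\mu$ of these points. Random Referee then samples $u, v, w \sim \mu$ i.i.d.\ and outputs $C(u,v,w) = \mbox{argmin}_{y \in \{u,v\}} \|w - y\|_2$, and the goal is to bound
$$R(\mu) = \frac{\mathbb{E}_{u,v,w \sim \mu}\bigl[\sum_{x \in N} \|C(u,v,w) - x\|_2\bigr]}{\sum_{x \in N} \|x\|_2} \leq 1.97.$$

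The central step is the reduction advertised in the techniques section: bound $\sup_\mu R(\mu)$ by the worst-case \emph{pessimistic distortion} over tuples of a constant number $k$ of points. Given $k$ labelled points $y_1, \ldots, y_k \in \mathbb{R}^2$, the pessimistic distortion considers the Random-Referee expected cost when the roles $u, v, w, x$ range adversarially over all assignments among these points, divided by $\min_a \sum_i \|y_i - a\|_2$ with the minimization restricted to the tuple itself. I would establish this reduction by a compactness argument in the space of probability measures on $\mathbb{R}^2$ with bounded first moment: any worst-case $\mu$ can be approximated by a discrete measure, mass far from $a^*$ can be safely truncated because it inflates the denominator at least as much as the numerator (using the 1-Lipschitz dependence of both integrands on the positions), and exploiting the exchange symmetry of Random Referee under permutations of its three inputs together with the concavity of $\min$ in the OPT denominator, I would argue that collapsing support points can only increase $R$, leaving a worst case with bounded support.

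Once the problem is reduced to $k$ points in a bounded region around the origin, I would discretize by placing each $y_i$ on a grid of spacing $\epsilon$. The pessimistic distortion is Lipschitz in each $y_i$ with a computable constant $L$, so choosing $\epsilon$ small enough (roughly $(1.97 - B)/(Lk)$, where $B < 1.97$ is the value certified by the search) is enough to conclude that no continuous configuration exceeds $1.97$. The remaining finite enumeration over grid configurations is then carried out by computer, checking the pessimistic distortion against $B$ for each tuple. The hardest step is making the tuple reduction fully rigorous: one must pin down $k$, control the truncation of faraway mass quantitatively rather than heuristically, and verify that the pessimistic distortion really does dominate the true $R(\mu)$ under every admissible extension of the tuple. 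Choosing $k$ too small makes the certified bound loose (which the authors acknowledge by conjecturing $1.75$ under a more ambitious search), while choosing $k$ too large renders the grid enumeration infeasible, and navigating this tradeoff is likely what yields the slack $1.97$ figure in the stated theorem.
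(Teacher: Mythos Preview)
Your high-level plan matches the paper's architecture (reduce to constant-size tuples, normalize, discretize, computer search), but two of the three central steps have genuine gaps relative to what the paper actually does.

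\textbf{The tuple reduction.} You propose to reach a constant-size configuration via compactness in the space of measures, truncation of distant mass, and a ``collapsing support points can only increase $R$'' argument. None of these are needed, and the last is not obviously true. The paper's reduction (Lemma~\ref{lemma:distortion}) is purely algebraic: the numerator of Distortion is an average over four i.i.d.\ draws $(u,v,w,x)$ and the denominator is an average over one draw, so the ratio can be rewritten as a ratio of sums indexed by $5$-tuples drawn from $N$. Relaxing the global $a^*$ to a per-tuple optimum only shrinks the denominator, and then the elementary inequality $\frac{\sum_t A_t}{\sum_t B_t} \le \max_t \frac{A_t}{B_t}$ finishes the job. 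This fixes $k=5$ immediately; it is not a parameter to be tuned against computational feasibility.

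Relatedly, your definition of pessimistic distortion restricts the OPT minimization to the tuple itself. The paper minimizes over all of $\mathbb{R}^2$. Your version has a larger denominator, hence a smaller pessimistic distortion, but it no longer upper bounds the true Distortion: the step ``replace the global $a^*$ by a per-tuple optimum'' requires that $a^*$ be feasible in each per-tuple minimization, which fails if $a^*$ is not among the five sampled points.

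\textbf{The discretization.} You invoke a Lipschitz bound on pessimistic distortion to pass to a grid. But the referee's output $C(\{u,v\},w)=\arg\min_{y\in\{u,v\}}\|w-y\|_2$ is \emph{discontinuous} in $w$ along the perpendicular bisector of $u$ and $v$, so pessimistic distortion is not Lipschitz and a naive $\epsilon$-net argument fails. The paper handles this explicitly: it defines an ``indifference'' zone $|d(z,x)-d(z,y)|\le \tfrac{3\sqrt{2}}{2}\delta$, and in the grid computation replaces $C_\phi$ by the \emph{worst} of the two candidates whenever the snapped configuration lies in that zone. Outside the zone, snapping cannot flip the referee's choice, and a genuine triangle-inequality bound applies. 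This case split is what makes the grid certificate sound, and it is missing from your Lipschitz sketch.
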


In the remainder of this section, we sketch the proof of Theorem~\ref{theorem:ordinalrandomBargain}. Proofs of some of the technical lemmas are in the appendix. Suppose for a contradiction that there is a set $N$ of agents in $\mathbb{R}^2$ such that the Distortion of Random Referee is at least $1.97$ under the Euclidean metric. We will successively refine this hypothesis for a contradiction, finally arguing that it implies that some ``bad'' instance would appear on an exhaustive computer assisted search over a finite grid.

The crucial lemmas bound the \textit{pessimistic distortion} of any set of five points in $\mathbb{R}^2$ and relate this to the actual distortion. The quantity is pessimistic because it allows ``OPT'' to choose a separate point for every 5-tuple; the numerator of each 5-tuple is just a rewriting of the expected social cost of Random Referee. In this section, since we assume $d(u, p_u) = 0$, it will not be necessary to refer to $u$ and $p_u$ separately, and it will be convenient to let $C(\{p_u,p_v\},p_w) = \mbox{argmin}_{a \in \{p_u, p_v\}} d(p_w, a)$.   

\begin{definition}
\label{definition:pessimisticDistortion}
    The \textbf{pessimistic distortion}\footnote{Note that pessimistic distortion is technically a function of the mechanism used; we will use it exclusively in reference to Random Referee.} of $\mathcal{P} = \{ x_1,x_2,x_3,x_4,x_5 \} \subset \mathbb{R}^2$ is defined as $$PD(\mathcal{P}) := \frac{SCRR_{avg}(\mathcal{P})}{OPT_{avg}(\mathcal{P})}$$ where $SCRR_{avg}(\mathcal{P})$ is the average social cost of Random Referee, which we can write as $$\frac{1}{30} \sum_{i=1}^5 \, \sum_{j>i}^5 \sum_{k \neq i,j} \; \sum_{l \neq i,j,k} d(x_l, C(\{x_i, x_j\}, x_k)),$$ and the average cost of the optimal solution is $$OPT_{avg}(\mathcal{P}) := \min_{y \in \mathbb{R}^2} \frac{1}{5} \sum_{r=1}^5 d(x_r, y).$$ Finally, $PD(\mathcal{P}) = 1$ if $x_1 = x_2 = x_3 = x_4 = x_5$.
\end{definition}


The first lemma relates this pessimistic distortion to the actual Distortion of Random Referee. The worst case (over 5-tuples) pessimistic distortion upper bounds the Distortion of Random Referee. Interestingly, this statement is not specific to the Euclidean plane, suggesting that our approach may be broadly applicable for proving stronger Distortion bounds on other specific metrics of interest. 
\begin{lemma}
\label{lemma:distortion}
    If $PD(P) \leq \beta$ for all $\mathcal{P}=\{x_1, \hdots, x_5\} \subset \mathbb{R}^2$ then the Distortion of Random Referee is at most $\beta$ on $\mathbb{R}^2$.
\end{lemma}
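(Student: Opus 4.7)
The plan is to reduce bounding the worst-case Distortion of Random Referee on an arbitrary instance $N \subset \mathbb{R}^2$ to the pointwise pessimistic-distortion bound on a random $5$-tuple drawn from $N$. The argument is a linearity-of-expectation averaging, and the metric structure of $\mathbb{R}^2$ plays no role — the reduction works in any metric space, which is precisely what makes the lemma useful as a general tool and what the authors flag in the paragraph preceding it.

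Concretely, fix any finite $N$. Let $y^\ast = \arg\min_y \sum_{u \in N} d(u, y)$, so that $OPT(N) = \sum_u d(u, y^\ast)$, and let $SCRR(N) = \mathbb{E}\bigl[\sum_{x \in N} d(x, C(\{u,v\},w))\bigr]$ where $u,v,w$ are iid uniform in $N$. The goal is $SCRR(N) \le \beta \cdot OPT(N)$. Draw $\mathcal{P} = (X_1,\ldots,X_5)$ as five iid uniform samples from $N$. The hypothesis $PD(\mathcal{P}) \le \beta$ holds for every realization of $\mathcal{P}$, hence $SCRR_{avg}(\mathcal{P}) \le \beta \cdot OPT_{avg}(\mathcal{P})$ deterministically, and taking expectations yields
\[ \mathbb{E}[SCRR_{avg}(\mathcal{P})] \;\le\; \beta \cdot \mathbb{E}[OPT_{avg}(\mathcal{P})]. \]

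It remains to simplify each side. On the left, linearity of expectation gives that each of the terms $d(X_l, C(\{X_i,X_j\},X_k))$ in the defining sum has expectation equal to $SCRR(N)/|N|$, since any four distinct-index coordinates of a $5$-tuple of iid draws are themselves iid samples from $N$. Summing the $60$ terms and applying the definition's normalization expresses $\mathbb{E}[SCRR_{avg}(\mathcal{P})]$ as a fixed multiple of $SCRR(N)/|N|$. On the right, $y^\ast$ is a feasible (though generally suboptimal) center for the $5$-point subproblem, so $OPT_{avg}(\mathcal{P}) \le \tfrac{1}{5}\sum_r d(X_r,y^\ast)$ pointwise, and taking expectations gives $\mathbb{E}[OPT_{avg}(\mathcal{P})] \le \tfrac{1}{|N|}\sum_u d(u,y^\ast) = OPT(N)/|N|$. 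Combining the two and cancelling the common $1/|N|$ factor produces $SCRR(N) \le \beta \cdot OPT(N)$.

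The main obstacle is the bookkeeping in this constant-matching step: one must carefully count the iid index quadruples $(i,j,k,l)$ inside a random $5$-tuple so that the $1/30$ normalization in the definition of $SCRR_{avg}$ aligns cleanly with $SCRR(N)/|N|$, and then verify that the scalar factors on both sides cancel against $OPT(N)/|N|$. Everything else is elementary — a Jensen-style expectation-of-a-minimum bound on the optimum side (taking $y^\ast$ as witness) and linearity of expectation on the Random Referee side. This simplicity is exactly what permits the authors to pivot in the remainder of Section~\ref{sec:euclidean} to a computer-assisted search for the worst $5$-tuple on a finite grid in $\mathbb{R}^2$.
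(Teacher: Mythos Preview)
Your approach is correct and is essentially the paper's own argument, phrased in the language of iid sampling and linearity of expectation rather than the paper's ``rewrite the Distortion as a weighted sum over $5$-tuples, then bound the ratio by the maximum termwise ratio.'' Both arguments hinge on exactly the two observations you isolate: (i) for distinct indices $i,j,k,l$ in a random $5$-tuple, $d(X_l,C(\{X_i,X_j\},X_k))$ has the same expectation as a single Random Referee cost term on $N$, and (ii) the global optimum $a^*$ is a feasible witness for every $5$-point subproblem, so $\mathbb{E}[OPT_{avg}(\mathcal{P})]\le OPT(N)/|N|$. Your caveat that the only real work is the constant bookkeeping is accurate; once you actually carry it out you will find the normalization in the definition of $SCRR_{avg}$ gives $\mathbb{E}[SCRR_{avg}(\mathcal{P})]$ equal to a fixed multiple of $SCRR(N)/|N|$ that is at least $1$, so the claimed inequality $SCRR(N)\le\beta\cdot OPT(N)$ follows.
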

\begin{proof}
Observe that we can rewrite the Distortion of Random Referee as a summation over all possible 5-tuples of points. Let $\mathcal{P}$ be a multiset of points (i.e., possibly with repeats). Let $\rho(\mathcal{P})$ be an ordering of $\mathcal{P}$. Let $$C(\rho(\mathcal{P})) = C(\{\rho_1(\mathcal{P}), \rho_2(\mathcal{P})\}, \rho_3(\mathcal{P})).$$ We can rewrite the Distortion of Random Referee over permutations of 5-tuples as
\begin{equation*}
	\frac{\frac{1}{|N|^4} \sum_{i,j,k,l \in N} d(p_l, C(\{p_i, p_j\}, p_k)}{\frac{1}{|N|} \sum_{i \in N} d(p_i, a^*)} 
\end{equation*}
\begin{equation*}
	= \frac{\sum_{\mathcal{P} \subset \mathcal{P}_N} \Pr(\mathcal{P}) \frac{1}{5!} \sum_{\rho(\mathcal{P})} \sum_{l \in \{4,5\}} d(\rho_l(\mathcal{P}), C(\rho(\mathcal{P})))}{\frac{1}{|N|} \sum_{i \in N} d(p_i, a^*)}. 
\end{equation*}
In words, we are considering all 5-tuples (with replacement) of agent points, and for each we consider the average distance over all orderings of the five agent points of the distance between the last two points and the outcome of Random Referee when the first three agents participate. This is in turn upper bounded by allowing OPT to choose a different $a^*$ for every 5-tuple, so that the Distortion is at most
\begin{equation*}
\begin{split}
	&  \frac{\sum_{\mathcal{P} \subset P_N} \Pr(\mathcal{P})\frac{1}{5!} \sum_{\rho(\mathcal{P})} \sum_{l \in \{4,5\}} d(\rho_l(\mathcal{P}), C(\rho(\mathcal{P})))}{\sum_{\mathcal{P} \subset P_N} \Pr(\mathcal{P}) OPT_{avg}(\mathcal{P})} \\
	& \leq \max_{\mathcal{P} \subset P_N} \frac{\frac{1}{5!} \sum_{\rho(\mathcal{P})} \sum_{l \in \{4,5\}} d(\rho_l(\mathcal{P}), C(\rho(\mathcal{P})))}{OPT_{avg}(\mathcal{P})}.  
\end{split}
\end{equation*}

To complete the proof, note that $SCRR_{avg}(\mathcal{P})$ is equal to the numerator. We start with all 120 orderings of the $5$ points and avoid double counting the symmetric cases that arise from swapping the two points from which we take the argmin and swapping the two points from which we measure distance.
\end{proof}

Now we can refine our original hypothesis: without loss of generality, assume for a contradiction that there is a multiset $\mathcal{P}=\{x_1, \hdots, x_5\} \subseteq \mathbb{R}^2$ with $PD(\mathcal{P}) \geq 1.97$.

\begin{lemma}
\label{lemma:pessimistic}
    $PD(\mathcal{P}) < 1.97 $.
\end{lemma}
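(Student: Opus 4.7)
The plan is to reduce the claim to a finite, computer-checkable verification over a discrete grid, using Lipschitz continuity of $PD$ to interpolate between grid points. First, I would exploit the scale, rotation, and translation invariance of $PD$ to normalize $\mathcal{P}$. Let $x_1, x_2$ be a pair that attains $\max_{r,s} d(x_r, x_s)$; rotate, translate, and rescale so that $x_1 = (\tfrac12, 0)$ and $x_2 = (\tfrac12, 1)$. Since $\{x_1, x_2\}$ realizes the diameter, each remaining $x_r$ lies in the lens-shaped intersection of the unit disks centered at $x_1$ and $x_2$, which is contained in the bounded rectangle $R = [\tfrac12 - \tfrac{\sqrt 3}{2}, \tfrac12 + \tfrac{\sqrt 3}{2}] \times [0,1]$. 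The optimal referee point $y^* = \arg\min_y \tfrac15 \sum_r d(x_r, y)$ lies in the convex hull of $\mathcal{P}$ and therefore also in $R$, so $OPT_{avg}$ can be evaluated as a minimum over a compact set.

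Next, I would establish that $PD$ is Lipschitz in the coordinates of $x_3, x_4, x_5$. Both $SCRR_{avg}$ and $OPT_{avg}$ are averages of Euclidean distances, each of which is $1$-Lipschitz in each coordinate; the discrete $\arg\min$ hidden inside $C(\{x_a,x_b\},x_c)$ can only switch on the locus $d(x_c, x_a) = d(x_c, x_b)$, where both choices give the same contribution, so $SCRR_{avg}$ is in fact continuous (and piecewise smooth). Because the diameter is $1$, the triangle inequality forces $d(y, x_1) + d(y, x_2) \geq 1$ for every $y$, hence $OPT_{avg} \geq \tfrac15$; and $SCRR_{avg} \leq \sqrt 3$ since every pairwise distance in $R$ is at most $\sqrt 3$. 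These uniform bounds convert the Lipschitz estimates of numerator and denominator into an explicit Lipschitz constant $L$ for $PD$ on the normalized region.

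Finally, I would fix a grid $G \subset R$ of spacing $\delta$, enumerate all $5$-tuples with $x_1, x_2$ fixed as above and $x_3, x_4, x_5 \in G$, and compute $PD$ for each. If every grid configuration satisfies $PD \leq 1.97 - L\sqrt 2\, \delta$, the Lipschitz bound extends the strict inequality to every $\mathcal{P} \subset \mathbb{R}^2$, completing the proof; the inner minimization defining $OPT_{avg}$ can be handled either by a second, finer grid on $R$ or by a convex-optimization routine, since $y \mapsto \sum_r d(x_r, y)$ is convex. The main obstacle I anticipate is the tradeoff between the grid spacing $\delta$ and the Lipschitz constant $L$: too coarse and the slack $L\sqrt 2\,\delta$ swamps the gap $1.97 - \max_G PD$; too fine and the enumeration is infeasible. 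I would tighten $L$ by separately analyzing the near-degenerate regime where several of the $x_r$ cluster (so $OPT_{avg}$ approaches $\tfrac15$), giving up the uniform bound on a measure-zero subset handled by an ad hoc analytic argument, and I would cut the enumeration by a factor of roughly $12$ using the $S_3$-permutation symmetry of $(x_3, x_4, x_5)$ together with the reflection symmetry across the line through $x_1$ and $x_2$.
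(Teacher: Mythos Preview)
Your overall strategy---normalize via scale/rotation/translation invariance, grid-search, interpolate---is exactly the paper's, but the central technical step is wrong. You assert that $SCRR_{avg}$ is continuous because at the switching locus $d(x_c,x_a)=d(x_c,x_b)$ ``both choices give the same contribution.'' They do not: the quantity appearing in the sum is $d(x_l, C(\{x_a,x_b\},x_c))$ for an \emph{observer} $x_l\neq x_c$, not $d(x_c, C(\{x_a,x_b\},x_c))$. The referee $x_c$ is indifferent at that locus, but the observer generically is not, so as $x_c$ crosses the perpendicular bisector of $x_a$ and $x_b$ the term jumps by $|d(x_l,x_a)-d(x_l,x_b)|$, which can be as large as the diameter. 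Hence $SCRR_{avg}$ (and therefore $PD$) is discontinuous in the coordinates of the referee, and no finite Lipschitz constant $L$ exists; the interpolation $PD\le \max_G PD + L\sqrt{2}\,\delta$ fails.

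The paper confronts precisely this discontinuity. After snapping each $x$ to its nearest grid point $\phi(x)$, it observes that $C(\{x_i,x_j\},x_k)$ and $C(\{\phi(x_i),\phi(x_j)\},\phi(x_k))$ can disagree only when $|d(x_k,x_i)-d(x_k,x_j)|\le \tfrac{3\sqrt{2}}{2}\delta$; it calls such a referee \emph{indifferent}. In the grid computation every indifferent term is replaced by the worst of the two options, $\max_{q\in\{i,j\}} d(\phi(x_l),\phi(x_q))$, yielding a quantity $GridSum$ that genuinely upper-bounds $SCRR_{avg}$ for every nearby off-grid configuration. Combined with $OPT_{avg}\ge \tfrac15$ (which you also use), this gives a multiplicative error of the form $(1+4\sqrt{2}\,\delta)(1+\tfrac{3\sqrt{2}}{2}\delta)$, so that $PD(\mathcal{P})\ge 1.97$ would force $GridSum/OPT_{avg}(\phi(\mathcal{P}))\ge 1.781$ at $\delta=1/75$, which the exhaustive search rules out. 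Your proposal needs exactly this worst-case-over-ties device; without it the grid bound does not transfer off the grid.
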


Lemma~\ref{lemma:pessimistic} provides the contradiction to our hypothesis and establishes Theorem~\ref{theorem:ordinalrandomBargain}. We outline the main ideas of the proof. First, we consider a grid on the Euclidean plane, and argue that we can assume a certain canonical structure of $\mathcal{P}$ in relation to that grid by scaling, translating, and rotating. Next, we carefully argue for how much the pessimistic distortion would change if every point in $\mathcal{P}$ were snapped to a grid point. We show that if there is some $\mathcal{P}$ with pessimistic distortion at least 1.97, there must be a 5-tuple of points on a sufficiently fine (but finite) grid with a sufficiently large pessimistic distortion. However, we employ computer assisted analysis to brute force search over such a grid, and find no such bad example. 

\medskip
\noindent \textbf{Discussion.} Our analysis is slack in two ways: (1) we have to interpolate between grid points, and (2) we consider pessimistic distortion over 5-tuples. Both are computational constraints: (1) because we cannot simulate an arbitrarily fine grid, and (2) because there is a combinatorial blow up in the search space when considering larger tuples (note that by considering 5-tuples, we implicitly allow OPT to choose a separate optimal solution for each 5-tuple). For Random Referee, the worst case example found by computer simulations for grid points is fairly simple: The 5 points lie on a straight line with pessimistic distortion 1.75. We conjecture the same example is the worst case even in the continuous plane, which suggests a distortion bound of at most 1.75. We leave finding the exact bound as an open question, as our result is sufficient to demonstrate that comparison queries can take advantage of structure that top-$k$ queries cannot. 




\section{Favorite Only Mechanisms: Random Oligarchy}
\label{sec:oligarchy}
Recall that a favorite query asks an agent $u \in N$ for her favorite alternative $p_u$. In this section, we return to the general model (arbitrary metrics and not assuming that $d(u, p_u) = 0$) and study mechanisms that are restricted to only use favorite queries. We show that essentially optimal Distortion as a function of $|\Se|$ is achieved by a simple mechanism that uses just $3$ queries. We call this mechanism Random Oligarchy.

\begin{definition}
\label{def:RO}
	The \textbf{Random Oligarchy (RO)} mechanism samples three agents $u,v,w \in N$ independently and uniformly at random with replacement. All three are asked for their favorite alternatives $p_u$, $p_v$, and $p_w$ in $\Se$. If the same alternative is reported at least twice, output that, else output one of the three alternatives uniformly at random.
\end{definition}

We prove that Random Oligarchy has the best of both worlds with respect to the other favorite only mechanisms of Random Dictatorship and 2-Agree. Unlike 2-Agree, Random Oligarchy has constant sample complexity and the same Distortion bound of 3 for large $|\Se|$ as Random Dictatorship. However, like 2-Agree, it outperforms Random Dictatorship for small $|\Se|$. The proof of the theorem is given in the appendix. 


\begin{theorem}
\label{theorem:oligarchy}
	The Distortion of Random Oligarchy is upper bounded by $3$ for arbitrary $|\Se|$, and by the following expression for particular $|\Se|$. $$1 + 2\max_{p \in [0,1]} \left( 1 + p^2(p-2) + \frac{(p-1)^3}{|\Se|-1} \right)$$
\end{theorem}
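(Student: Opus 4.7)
The plan is to analyze $E[SC(X)]$ directly via the output distribution of Random Oligarchy combined with two elementary triangle-inequality estimates. Write $q_a$ for the fraction of agents whose favorite is $a$ and $S := \sum_a q_a^2$. Partitioning on whether the three chosen agents report $a$ three times, exactly twice, or whether they report three pairwise distinct favorites with $a$ among them (in which case $a$ is selected uniformly with probability $1/3$), a direct computation yields
\[
P(a) \;:=\; \Pr[\text{Random Oligarchy outputs } a] \;=\; q_a\,(1 + q_a - S),
\]
which I would verify together with the sanity check $\sum_a P(a)=1$ before anything else.

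Next, for any alternative $a$ I would bound $SC(a)$. Agents $u$ with $p_u = a$ satisfy $d(u,a) \le d(u,a^*) =: Z_u$ by the definition of $p_u$, while agents with $p_u \neq a$ satisfy $d(u,a) \le Z_u + d(a^*,a)$ by the triangle inequality; summing gives $SC(a) \le OPT + (1-q_a)|N|\,d(a^*,a)$, which is automatically tight when $a=a^*$. A second triangle inequality $d(a^*,a) \le d(a^*,u)+d(u,a) \le 2 Z_u$ for any $u$ with $p_u = a$, averaged over those $u$, gives $d(a^*,a) \le 2 \bar Z_a$ where $\bar Z_a$ is the mean of $Z_u$ over $\{u : p_u = a\}$. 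Plugging these into $E[SC(X)] = \sum_a P(a)\,SC(a)$, using $P(a)/q_a = 1 + q_a - S$, and rewriting $q_a |N| \bar Z_a = \sum_{u:\,p_u = a} Z_u$ to fold per-alternative sums into per-agent sums yields
\[
E[SC(X)] \;\le\; OPT + 2\sum_{u:\,p_u\neq a^*} (1+q_{p_u}-S)(1-q_{p_u})\,Z_u .
\]
The multiplier is nonnegative and $\sum_{u:\,p_u\neq a^*} Z_u \le OPT$, so pulling out the pointwise maximum gives $\mbox{Distortion}(\mathrm{RO}) \le 1 + 2\max_{a\neq a^*}(1+q_a-S)(1-q_a)$.

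The last step reduces this pointwise maximum to $\max_p g(p)$ via Jensen's inequality on $S$. For any $a \neq a^*$, set $s = q_a$; the remaining $|\Se|-1$ alternatives have popularities summing to $1-s$, so $\sum_{b \neq a} q_b^2 \ge (1-s)^2/(|\Se|-1)$ and hence $S \ge s^2 + (1-s)^2/(|\Se|-1)$. Since $1 + s - S \ge s \ge 0$ and $1 - s \ge 0$, the inequality on $S$ transfers multiplicatively and
\[
(1+q_a-S)(1-q_a) \;\le\; \bigl(1+s-s^2-(1-s)^2/(|\Se|-1)\bigr)(1-s) \;=\; 1 + s^2(s-2) + (s-1)^3/(|\Se|-1) \;=\; g(s).
\]
Taking the maximum over $s \in [0,1]$ yields the refined bound $1 + 2\max_{p\in[0,1]} g(p)$. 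For the universal bound of $3$, I would observe that both $p^2(p-2)$ and $(p-1)^3/(|\Se|-1)$ are nonpositive on $[0,1]$, hence $g(p) \le 1$ and the refined bound is at most $3$.

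The step I expect to demand the most care is the per-alternative to per-agent reindexing: it is exactly the combination of $P(a)/q_a = 1 + q_a - S$ with the averaged bound $d(a^*,a) \le 2 \bar Z_a$ and the $(1-q_a)$ factor from the sharper estimate on $SC(a)$ that produces the tight coefficient $(1+q_{p_u}-S)(1-q_{p_u})$. After that the argument is just a single application of Jensen plus elementary calculus on $g$.
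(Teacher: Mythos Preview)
Your proof is correct. The paper takes a shorter route: it invokes a lemma from \cite{2-agree} stating that any favorite-only mechanism $f$ has Distortion at most $1 + 2\max_{p\in[0,1]} m_f(p)\,(1-p)/p$, where $m_f(p)$ is the maximum probability (over all instances) of outputting an alternative that is the favorite of exactly a $p$-fraction of agents; the remaining work is just to bound $m_f(p)$ for Random Oligarchy, noting that $m_f(p)$ is maximized when the residual $1-p$ mass is spread uniformly over the other $|\Se|-1$ alternatives, which gives $m_f(p) = p\bigl(1+p-p^2 - (1-p)^2/(|\Se|-1)\bigr)$ and hence the stated $g(p)$. Your argument is essentially that cited lemma reproved inline together with the same uniform-mass observation: the two triangle-inequality estimates you use (on $SC(a)$ and on $d(a^*,a)$) are the standard ingredients behind the lemma, your exact formula $P(a)=q_a(1+q_a-S)$ is the instance-specific refinement of the paper's $m_f(p)$, and your Jensen step on $S$ coincides with the paper's worst-case-over-instances argument. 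What you gain is a self-contained proof that exposes the exact output distribution of Random Oligarchy; what the paper's route gains is brevity and a modular statement that applies immediately to any favorite-only mechanism once its $m_f(p)$ is known.
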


Figure~\ref{figure:oligarchy} shows the Distortion bounds of favorite only mechanisms. Comparing against the lower bound for any favorite only mechanism from~\cite{2-agree} allows us to see that Random Oligarchy is essentially optimal among all favorite only mechanisms. In comparison to existing mechanisms, Random Oligarchy outperforms Random Dictatorship for small $|\Se|$, and outperforms 2-Agree for large $|\Se|$, while only using three favorite queries.
\begin{figure}
\centering
\includegraphics{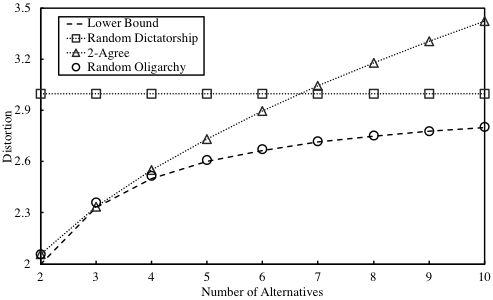} 
\caption{Distortion of Favorite Only Mechanisms. Upper bounds for Random Oligarchy are from Theorem~\ref{theorem:oligarchy}. Lower bounds and 2-Agree upper bounds are from~\cite{2-agree}. Random Dictatorship is analyzed in~\cite{metricSocialChoiceRandomDictator}.}
\label{figure:oligarchy}
\end{figure}


\section{Open Directions}
\label{sec:open}
In this paper, we have considered constant sample complexity mechanisms. At a high level, we hope that our work inspires future research on lightweight mechanisms for social choice in large decision spaces. We also mention some natural technical questions raised by our work.

Compared to Distortion, there is much less understanding of the Squared Distortion of mechanisms. The only universal lower bound for Squared Distortion we are aware of is 4, a consequence of the lower bound of 2 for Distortion~\cite{metricSocialChoiceRandomDictator}. Our Random Referee mechanism achieves Squared Distortion of at most 21 using just three ordinal queries. Closing this gap remains an interesting question even for mechanisms that elicit full ordinal information.

The analysis in Section~\ref{sec:euclidean} of Random Referee generalizes to higher dimensional Euclidean space, still only reasoning about 5-tuples of points. Computer search over a coarse grid in four dimensions again shows that the pessimistic distortion bound is better than $2$. However, we have not been able to run the search on a fine enough grid to prove a Distortion bound formally. We leave proving this for higher dimensional Euclidean spaces as an interesting open question. Additionally, we hope that related methods may be of general interest for proving tighter Distortion bounds of mechanisms on restricted metric spaces. 

It remains unclear whether using a constant number of queries greater than 3 would meaningfully improve our results. E.g., consider the natural extension of Random Referee: sample the favorite points of $k$ agents and ask a random referee to choose their favorite from among these. Using $k > 2$ does not straightforwardly decrease the Squared Distortion bound of 21 for $k$ = 2. Also, as $k$ becomes large, this mechanism devolves to Random Dictatorship. Another natural question is whether $O(k)$ comparison queries are necessary/sufficient to bound the $k$'th moment of Distortion. We leave these as additional open questions. 



\section*{Acknowledgments}
Brandon Fain is supported by NSF grants CCF-1637397 and IIS-1447554. Ashish Goel is supported by NSF grant CCF-1637418 and ONR grant N00014-15-1-2786. Kamesh Munagala is supported by NSF grants CCF-1408784, CCF-1637397, and IIS-1447554. Much of this work was done while Nina Prabhu was a student at The North Carolina School of Science and Mathematics, in cooperation with Duke University.

\bibliographystyle{abbrv}
\bibliography{refs}

\begin{thebibliography}{10}

\bibitem{graphDistortion}
B.~{Abramowitz} and E.~{Anshelevich}.
\newblock {Utilitarians without utilities: maximizing social welfare for graph
  problems using only ordinal preferences - Full Version}.
\newblock {\em ArXiv e-prints}, Nov. 2017.

\bibitem{deterministic}
E.~Anshelevich, O.~Bhardwaj, E.~Elkind, J.~Postl, and P.~Skowron.
\newblock Approximating optimal social choice under metric preferences.
\newblock {\em Artificial Intelligence}, 264:27 -- 51, 2018.

\bibitem{metricSocialChoiceRandomDictator}
E.~Anshelevich and J.~Postl.
\newblock Randomized social choice functions under metric preferences.
\newblock {\em J. Artif. Int. Res.}, 58(1):797--827, Jan. 2017.

\bibitem{facilityLocation}
E.~{Anshelevich} and W.~{Zhu}.
\newblock {Ordinal approximation for social choice, matching, and facility
  location problems given candidate positions}.
\newblock {\em ArXiv e-prints}, May 2018.

\bibitem{azizPESP}
H.~Aziz, F.~Brandl, and F.~Brandt.
\newblock On the incompatibility of efficiency and strategyproofness in
  randomized social choice.
\newblock In {\em Proceedings of the Twenty-Eighth AAAI Conference on
  Artificial Intelligence}, 2014.

\bibitem{thresholdVoting}
U.~Bhaskar, V.~Dani, and A.~Ghosh.
\newblock Truthful and near-optimal mechanisms for welfare maximization in
  multi-winner elections, 2018.

\bibitem{implicitUtilitarian}
C.~Boutilier, I.~Caragiannis, S.~Haber, T.~Lu, A.~D. Procaccia, and O.~Sheffet.
\newblock Optimal social choice functions: A utilitarian view.
\newblock {\em Artificial Intelligence}, 227:190--213, 2015.

\bibitem{sequentialElimination}
S.~Bouveret, Y.~Chevaleyre, F.~Durand, and J.~Lang.
\newblock Voting by sequential elimination with few voters.
\newblock In {\em Proceedings of the Twenty-Sixth International Joint
  Conference on Artificial Intelligence, {IJCAI-17}}, pages 128--134, 2017.

\bibitem{PESP}
F.~Brandl, F.~Brandt, M.~Eberl, and C.~Geist.
\newblock Proving the incompatibility of efficiency and strategyproofness via
  smt solving.
\newblock {\em J. ACM}, 65(2):6:1--6:28, Jan. 2018.

\bibitem{computationalSocialChoice}
F.~Brandt, V.~Conitzer, U.~Endriss, J.~Lang, and A.~D. Procaccia.
\newblock {\em Handbook of Computational Social Choice}.
\newblock Cambridge University Press, New York, NY, USA, 1st edition, 2016.

\bibitem{votingCommunicationComplexity}
I.~Caragiannis and A.~D. Procaccia.
\newblock Voting almost maximizes social welfare despite limited communication.
\newblock In {\em Proceedings of the Twenty-Fourth AAAI Conference on
  Artificial Intelligence}, AAAI'10, pages 743--748. AAAI Press, 2010.

\bibitem{YuCheng}
Y.~Cheng, S.~Dughmi, and D.~Kempe.
\newblock Of the people: {V}oting is more effective with representative
  candidates.
\newblock In {\em Proceedings of the 2017 ACM Conference on Economics and
  Computation}, EC '17, 2017.

\bibitem{communicationComplexity}
V.~Conitzer and T.~Sandholm.
\newblock Communication complexity of common voting rules.
\newblock In {\em Proceedings of the 6th ACM Conference on Electronic
  Commerce}, EC '05, pages 78--87, New York, NY, USA, 2005. ACM.

\bibitem{varianceKidney}
H.~Esfandiari and G.~Kortsarz.
\newblock A bounded-risk mechanism for the kidney exchange game.
\newblock In E.~Kranakis, G.~Navarro, and E.~Ch{\'a}vez, editors, {\em LATIN
  2016: Theoretical Informatics}, pages 416--428, Berlin, Heidelberg, 2016.
  Springer Berlin Heidelberg.

\bibitem{sequentialDeliberation}
B.~Fain, A.~Goel, K.~Munagala, and S.~Sakshuwong.
\newblock Sequential deliberation for social choice.
\newblock In N.~R.~Devanur and P.~Lu, editors, {\em Web and Internet
  Economics}, pages 177--190, Cham, 2017. Springer International Publishing.

\bibitem{Feldman}
M.~Feldman, A.~Fiat, and I.~Golomb.
\newblock On voting and facility location.
\newblock In {\em Proceedings of the 2016 ACM Conference on Economics and
  Computation}, EC '16, pages 269--286, New York, NY, USA, 2016. ACM.

\bibitem{rangeVoting}
A.~Filos-Ratsikas and P.~B. Miltersen.
\newblock Truthful approximations to range voting.
\newblock In T.-Y. Liu, Q.~Qi, and Y.~Ye, editors, {\em Web and Internet
  Economics}, pages 175--188, Cham, 2014. Springer International Publishing.

\bibitem{metricSocialChoiceLowerBounds}
A.~Goel, A.~K. Krishnaswamy, and K.~Munagala.
\newblock Metric distortion of social choice rules: Lower bounds and fairness
  properties.
\newblock In {\em Proceedings of the 2017 ACM Conference on Economics and
  Computation}, EC '17, pages 287--304, New York, NY, USA, 2017. ACM.

\bibitem{PBP}
A.~Goel, A.~K. Krishnaswamy, S.~Sakshuwong, and T.~Aitamurto.
\newblock Knapsack voting.
\newblock {\em Collective Intelligence}, 2015.

\bibitem{2-agree}
S.~Gross, E.~Anshelevich, and L.~Xia.
\newblock Vote until two of you agree: Mechanisms with small distortion and
  sample complexity.
\newblock In {\em Proceedings of the Thirty-First AAAI Conference on Artificial
  Intelligence}, 2017.

\bibitem{variance}
A.~Procaccia, D.~Wajc, and H.~Zhang.
\newblock Approximation-variance tradeoffs in facility location games.
\newblock In {\em Proceedings of the Thirty-Second AAAI Conference on
  Artificial Intelligence}, 2018.

\bibitem{distortion}
A.~D. Procaccia and J.~S. Rosenschein.
\newblock The distortion of cardinal preferences in voting.
\newblock In M.~Klusch, M.~Rovatsos, and T.~R. Payne, editors, {\em Cooperative
  Information Agents X}, pages 317--331, Berlin, Heidelberg, 2006. Springer
  Berlin Heidelberg.

\end{thebibliography}

\section*{Appendix}
\label{sec:appendix}
\subsection*{Proof of Theorem~\ref{theorem:squaredDIstortionLowerBound}}
We want to construct a single ordinal profile over top-$k$ preferences such that for all randomized social choice mechanisms, there is some instantiation of these top-$k$ preferences in a metric space on which the mechanism has Squared Distortion $\Omega(|\Se|)$. The top-$k$ ordinal profile is simple: Each agent $u \in N$ has completely unique top-$k$ preferred feasible alternatives $S_u \subseteq \Se$. There are no other alternatives, so $|\Se| = k|N| = \Theta(|N|)$ (note that a top-$k$ only mechanism must have constant $k$).  

Any randomized social choice mechanism must choose a distribution on $\Se$, so in particular, there will be some $u^* \in N$ such that the mechanism puts at least $1/|N|$ probability mass on the top-$k$ preferred alternatives of $u^*$. Given this, consider a metric space, consistent with the top-$k$ preference profile, where $S_u^*$ forms a small clique well separated from all of the other alternatives which form a large clique. More formally, let $0 < \epsilon \leq 1/|N|$. All pairwise distances are $\epsilon$ except those between the small and large clique. Instead, $d(u, a) = 1$ for all $u \neq u^*, a \in S_{u^*}$, and similarly $d(u^*, a) = 1$ for all $a \in \Se \backslash S_{u^*}$.

Then the Squared social cost of choosing $a \in S_{u^*}$ is simply $(|N|-1)^2$, whereas the optimal solution chooses any $a \in \Se \backslash S_{u^*}$ for Squared social cost of $\left( 1 + \epsilon (|N|-1)\right)^2 \leq \left( 1+\epsilon |N|\right)^2$. Since the mechanism chooses $a \in S_{u^*}$ with probability at least $1/|N|$, we can lower bound the Squared Distortion of any top-$k$ only mechanism $f$ as follows.
\begin{equation*}
\begin{split}
	Distortion^2 \left( f \right) & \geq \frac{1}{|N|} \frac{\left(|N|-1\right)^2}{\left( 1+\epsilon |N|\right)^2} \\
	& \geq \frac{\left(|N|-1\right)^2}{4|N|} = \Omega(|\Se|) 
\end{split}
\end{equation*}    


\subsection*{Proof of Theorem~\ref{theorem:lowerBoundGeneral}}
\begin{figure}[h]
\centering
\begin{tikzpicture} 
\fill[black] (1,1) circle (0.1);
\draw[black, thick] (1,1) -- (0,0);
\draw[black, thick] (1,1) -- (1,0);
\draw[black, thick] (1,1) -- (2,0);
\fill[gray] (2,1.75) circle (0.0);
\draw [dashed] plot [smooth, tension = 1] coordinates {(0,0) (0.5,-0.25) (1,0)};
\draw [dashed] plot [smooth, tension = 1] coordinates {(1,0) (1.5,-0.25) (2,0)};
\draw [dashed] plot [smooth, tension = 1] coordinates {(0,0) (1,-1) (2,0)};
\fill[red] (-0.1,-0.1) rectangle (0.1,0.1);
\fill[red] (0.9,-0.1) rectangle (1.1,0.1);
\fill[red] (1.9,-0.1) rectangle (2.1, 0.1);
\end{tikzpicture}
\caption{Example for Theorem~\ref{theorem:lowerBoundGeneral}. The red squares are agents as well as alternatives.}
\label{figure:lowerBoundExample}
\end{figure}
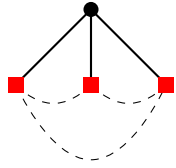
    We begin with the more general claim. Consider a star graph with three leaves and three agents, one agent at each leaf, as depicted in Figure~\ref{figure:lowerBoundExample}. All four points are also alternatives, and the distance between an agent and the colocated alternative is 0. Call the central vertex $B$ and the set of outer vertices $R$. We want to construct a single ordinal preference profile such that for all randomized social choice mechanisms, there is some metric space that induces this preference profile and for which the mechanism has Distortion at least 1.2.  The preference profile is simple: each of the three agents prefer their colocated alternative to the central alternative to either of the other leaf alternatives. We induce this profile via two metric spaces. In both spaces, $\forall v \in R, \; d(v,B) = 1$ (that is, the solid lines in Figure~\ref{figure:lowerBoundExample}); the spaces will differ in the distances between points \textit{inside} of $R$ (that is, the dashed lines in Figure~\ref{figure:lowerBoundExample}).
    
    First, suppose that $\forall v, v' \in R, \; v \neq v', \; d(v, v') = 2$.  Then the social cost of $B$ is 3, and the social cost of every vertex in $R$ is $4$; clearly the distortion of $B$ is 1 and the distortion of $v \in R$ is $4/3$. Let $\Pr(R)$ be the probability that a randomized ordinal social mechanism chooses some $v \in R$, and let $\Pr(B)$ be the probability that the same mechanism chooses $B$.  Then on this metric, the Distortion of the mechanism is (in expectation) just $\Pr(B) + (4/3)\Pr(R)$.
    
    Second, suppose that $\forall v, v' \in R, \; v \neq v', \; d(v, v') = 1+\epsilon$ for some $\epsilon \in (0,0.5)$.  Then the social cost of $B$ is 3, and the social cost of every alternative in $R$ is $2(1+\epsilon)$, so the distortion of $B$ is $\frac{3}{2(1+\epsilon)}$ and the distortion of $v \in R$ is $1$. Then on this metric, the Distortion of the mechanism is (in expectation) $\frac{3}{2(1+\epsilon)} \Pr(B) + \Pr(R)$.
    
    A social choice mechanism with only ordinal information cannot distinguish the cases, and must set some values for $\Pr(R)$ and $\Pr(B)$ that sum to 1, so such a mechanism cannot have Distortion less than $$\max \left( \Pr(B) + \frac{4}{3}\Pr(R), \; \frac{3}{2(1+\epsilon)} \Pr(B) + \Pr(R) \right ).$$ This expression is minimized by setting $\Pr(R)$ to $\frac{2+2\epsilon}{5-4\epsilon}$ (and $\Pr(B) = 1-\Pr(R)$) yielding a Distortion of $$\frac{4}{3} - \left(\frac{1}{3}\right) \left( \frac{2+2\epsilon}{5-4\epsilon} \right) \longrightarrow 6/5 \mbox{ as } \epsilon \rightarrow 0 $$

The argument for the lower bound of $1.118$ on the Euclidean plane is similar. Now, the first case places the three points in $R$ on a unit circle around the central point $B$, equidistant from one another. The law of cosines implies that $\forall v, v' \in R, \; v \neq v', \; d(v, v') = \sqrt{2 - 2\cos(120^{\circ})} \approx 1.732$. In the second case, the angular distance between consecutive points in $R$ is $60^{\circ}$, so that two pairwise distances are $1$, and one pairwise distance is  $\sqrt{2 - 2\cos(120^{\circ})}$. Completing the same argument as before yields the bound of 1.11.


\subsection*{Proof of Theorem~\ref{theorem:lowerBoundDistortionEuclidean}}
The construction for this lower bound follows that of Theorem~\ref{theorem:squaredDIstortionLowerBound}. We briefly consider the argument for completeness. There are $|\Se|/k$ agents, and for each agent $u$, there are $k$ unique alternatives $S_u \subset S$ that are $\epsilon$ distance away from them. $S_u$ are the top-$k$ preferred alternatives for $u$. There are no other alternatives. Consider any top-$k$ only mechanism (recall that $k$ is constant). Such a mechanism chooses a distribution over alternatives, and thus $\exists u^* \in N$ with top-$k$ preferred alternatives $S_{u^*}$ such that the mechanism puts at least $1/|N|$ probability mass on $S_{u^*}$. 

Suppose that the other $|N|-1$ agents are arrayed in a circle such that the pairwise distance of diameter $\delta$ (clearly, if $\epsilon$ is small enough, such an arrangement need not violate our earlier construction). Further, suppose the distance from $u^*$ to the center of this circle is 1. Then the social cost of any alternative in $S_{u^*}$ is at least $(|N|-1)(1-\epsilon-\delta/2)$, and the social cost of any alternative not in $S_{u^*}$ is at most $1+\epsilon \delta/2+ (|N|-2)(\delta + 2\epsilon)$. Therefore, the Distortion of choosing any alternative in $S_{u^*}$ is at least $$\frac{(|N|-1)(1-\epsilon-\delta/2)}{1+\epsilon \delta/2+ (|N|-2)(\delta + 2\epsilon)}.$$ 

As $\epsilon$ and $\delta$ approach 0, the above approaches $|N|-1$. The Distortion of choosing any alternative not in $S_{u^*}$ is clearly just $1$. Since the mechanism chooses an alternative in $S_{u^*}$ with probability at least $1/|N|$ (say $1/|N| + \alpha$, where $\alpha \geq 0$) and an alternative not in $S_{u^*}$ with the remaining $1 - 1/|N|-\alpha$ probability, the over Distortion is lower bounded by 
\begin{equation*}
\begin{split}
	& \left( \frac{1}{|N|} + \alpha \right) (|N|-1) + \left( \frac{|N|-1}{|N|} - \alpha \right) \\
	& =  2 - 2/|N| + \alpha \left( |N| - 2 \right) \\
	& \geq 2 - 2/|N| \\ 
	& \rightarrow 2 \mbox{ as } |N| \rightarrow \infty.
\end{split}
\end{equation*}


\subsection*{Proof of Lemma~\ref{lemma:pessimistic}}

To begin, we argue that we can assume a canonical structure to $\mathcal{P}$ in relation to a finite grid.

\begin{lemma}
\label{lemma:structure}
    Let $G$ be a $\delta$-fine grid: $G = \{0, \delta, 2\delta, \hdots, \frac{\delta-1}{\delta}, 1\}^2$ where $\delta > 0$ and $1/\delta \in \mathbb{N}$. There is a set of points $\mathcal{Q} = \{y_1, y_2, y_3, y_4, y_5\}$ such that $PD(\mathcal{Q}) = PD(\mathcal{P})$ and $\mathcal{Q}$ has the properties:
    \begin{enumerate}
        \item $ \mathcal{Q} \subset [0,1+\delta]^2 $,
        \item $ \max_{i,j} d(y_i,y_j) = 1 $,
        \item and these two maximally separated points $y_i$ and $y_j$ have the forms $y_i = (\alpha \delta, 0), y_j = (\alpha \delta, 1)$ for some $\alpha \in \{0,1,2,\hdots, 1/\delta\}$
    \end{enumerate}
\end{lemma}

\begin{proof}
We note that the pessimistic distortion of $\mathcal{P}$ is invariant to scaling, rotation, and translation of $\mathcal{P}$. This follows because the pessimistic distortion is still defined in terms of euclidean distances. Given this, the argument for the Lemma is simply that we can construct $\mathcal{Q}$ by rotations, translations, and scaling.
    
First scale $\mathcal{P}$ so that the maximum distance separating two points is 1 (note that at least two points are not equal). Next, rotate the points so that the line between these maximally separated points is vertical. Finally, translate the points until these maximally separated points have the appropriate forms - that is to say, until they lie exactly at grid points, and all other points are within $[0, 1+\delta]^2$. To see that this is possible, note that since they are the maximally separated points, no other points can lie outside $[0,1]$ in the vertical dimension. In the horizontal dimension, the total width spanned by the other points is at most 1, and we may need to expand by as much as $\delta$ in order to align the maximally separated points with grid points.
\end{proof}

The introduction of a $\delta$-fine grid anticipates the computer analysis we employ. Lemma~\ref{lemma:structure} allows us to further refine our assumption for a contradiction: Suppose without loss of generality\footnote{For simplicity of notation, we will assume $\mathcal{P}$ lies in a $[0,1]$ rather than $[0,1+\delta]$, and will simply run our computer analysis for the expanded grid. Similarly, relabeling $x_1$ and $x_2$ as the maximally separated points is just a notational convenience.} that there is some $\mathcal{P}=\{x_1, \hdots, x_5\} \subseteq [0,1]^2$ such that $PD(\mathcal{P}) \geq \alpha$, $d(x_1, x_2) = 1$, and $x_1, x_2$ are of the form $(\alpha \delta, 0), y_j = (\alpha \delta, 1)$ for some $\alpha \in \{0,1,2,\hdots, 1/\delta\}$.

\medskip
Define $\phi:[0,1]^2 \rightarrow G$ as $\phi(x) = \mbox{argmin}_{v \in G} d(x,v)$. In words, $\phi$ maps a general point in $[0,1]$ to its nearest point on our $\delta$-fine grid $G$. We will argue (roughly) that the pessimistic distortion of $\phi(\mathcal{P}):= \{\phi(x_1), \phi(x_2) \hdots, \phi(x_5)\}$ closely approximates that of $\mathcal{P}$, where the pessimistic distortion of of $\phi(\mathcal{P})$ is something we can compute directly in a brute force computer search. Recall Definition~\ref{definition:pessimisticDistortion}. We begin by bounding the numerator.

Let $ C_{\phi}(\{x,y\},z) = C(\{\phi(x),\phi(y)\},\phi(z))$. Clearly we have from the definition of $\phi$ that $d(x, \phi(x)) \leq \left(\sqrt{2}/2\right)\delta$ for all $x$ (and $\phi(x_1)=x_1$ and $\phi(x_2)=x_2$). However, it is not necessarily true that $d(C(\{x,y\},z), C_{\phi}(\{x,y\},z))$ is also small for all $x, y, z$. It is possible that $z$ is slightly closer to $x$ than $y$, but $\phi(z)$ is slightly closer to $\phi(y)$ than $\phi(x)$. We call such configurations indifferences.
    \begin{definition}
        Call $z$ \textbf{indifferent} with respect to $x$ and $y$ if $\left|d(z,x) - d(z,y)\right| \leq \frac{3\sqrt{2}}{2}\delta$.     
    \end{definition}
    Then we have the following fact: If $z$ is not indifferent with respect to $x$ and $y$, then $$d(C(\{x,y\},z), C_{\phi}(\{x,y\},z)) \leq \frac{\sqrt{2}}{2}\delta.$$ This follows from observing that if $z$ is not indifferent, $\phi(C(\{x,y\},z)) = C_{\phi}(\{x,y\},z))$. Now, for our point set $\mathcal{P}$, we need to bound $d(x_l, C(\{x_i, x_j\}, x_k))$. We first consider the case where $x_k$ is \textit{not} indifferent with respect to $x_i$ and $x_j$. Then we have can upper bound $d(x_l, C(\{x_i, x_j\}, x_k))$ by
    \begin{equation*}
        \begin{split}
            & d(x_l, \phi(x_l)) + d(\phi(x_l), C_{\phi}(\{x_i, x_j\}, x_k)) \\
            & \;\;\; + d(C_{\phi}(\{x_i, x_j\}, x_k), C(\{x_i, x_j\}, x_k)) \\ 
            & \leq \begin{cases} \left(\sqrt{2}\right) \delta + d(\phi(x_l), C_{\phi}(\{x_i, x_j\}, x_k)) & l \in \{3,4,5\} \\
            \left(\frac{\sqrt{2}}{2}\right) \delta + d(\phi(x_l), C_{\phi}(\{x_i, x_j\}, x_k)) & l \in \{1,2\}
            \end{cases}
        \end{split}
    \end{equation*}
    where the bound tightens for $x_1$ and $x_2$ because they are already at grid points. Suppose instead that $x_k$ \textit{is} indifferent with respect to $x_i$ and $x_j$. Then we can upper bound $d(x_l, C(\{x_i, x_j\}, x_k))$ by
    \begin{equation*}
        \begin{split}
            & d(x_l, \phi(x_l)) + d(\phi(x_l), \phi(C(\{x_i, x_j\}, x_k))) \\
           & \;\;\; + d(\phi(C(\{x_i, x_j\}, x_k)), C(\{x_i, x_j\}, x_k)) \\ 
       \end{split}
    \end{equation*}
which is at most
\begin{equation*}
\begin{cases} \left(\sqrt{2}\right) \delta + d(\phi(x_l), \phi(C(\{x_i, x_j\}, x_k))) & l \in \{3,4,5\} \\
            \left(\frac{\sqrt{2}}{2}\right) \delta + d(\phi(x_l), \phi(C(\{x_i, x_j\}, x_k))) & l \in \{1,2\}
            \end{cases}
\end{equation*}
Furthermore, since $C(\{x_i, x_j\}, x_k) \in \{x_i, x_j\}$, we know that $$d(\phi(x_l), \phi(C(\{x_i, x_j\}, x_k))) \leq \max_{q \in \{i,j\}} d(\phi(x_l, \phi(x_q))).$$ Call the set of indifferent cases $I$. We can average over the indifferences, and over $\{x_1, x_2\}$ and $\{x_3, x_4, x_4\}$, to upper bound $SCRR_{avg}(\mathcal{P})$ by
    \begin{equation*}
    \begin{split}
        & \left(\frac{3}{5}\right)(\sqrt{2}) \delta +\left(\frac{2}{5}\right) \left(\frac{\sqrt{2}}{2}\right)\delta \\
        & \;\;\; + \frac{1}{30} \sum_{I} \sum_{l \neq i, j, k} d(\phi(x_l), \phi(C(\{x_i, x_j\}, x_k))) \\
        & \;\;\; + \frac{1}{30} \sum_{\overline{I}} \sum_{l \neq i,j,k} d(\phi(x_l), C_{\phi}(\{x_i, x_j\}, x_k)) \\
        & \leq \frac{4\sqrt{2}}{5}\delta + \frac{1}{30} \sum_{I} \mbox{max}_{q \in \{i,j\}} \hspace{-0.1cm} \sum_{l \neq i,j,k} d(\phi(x_l), \phi(x_q)) \\
        & \;\;\; + \frac{1}{30} \sum_{\overline{I}} \sum_{l \neq i,j,k} d(\phi(x_l), C_{\phi}(\{x_i, x_j\}, x_k)) \\
    \end{split}
    \end{equation*}
    
    Call the summations in this last line $GridSum$. Then 
    \begin{equation*}
        \frac{SCRR_{avg}(\mathcal{P})}{GridSum} \leq 1 + \frac{4\sqrt{2} \delta}{5 \, GridSum}
    \end{equation*}
    $GridSum$ looks like a very involved quantity, but it is exactly what we will compute in our computer analysis. That is, when we are computing the pessimistic distortion on the grid for a set of five points, whenever we encounter an indifference, we take the worst case outcome over the two options. It is not hard to see that $GridSum \geq 1/5$ since there are least two points separated by a distance of $1$, so we arrive at equation~\ref{equation:SCBound}.
    \begin{equation}
    \label{equation:SCBound}
	 \frac{SCRR_{avg}(\mathcal{P})}{GridSum} \leq 1 + \left(4\sqrt{2}\right)\delta
    \end{equation}
    Now we need to bound $$\frac{OPT_{avg}(\phi(\mathcal{P}))}{OPT_{avg}(\mathcal{P})}.$$ Let \begin{equation*} 
    \begin{split}
    y_{\phi}^* & = \mbox{argmin}_y \frac{1}{5} \sum_{l=1}^5 d(\phi(x_l), y)\\
    y^* & = \mbox{argmin} _y \frac{1}{5} \sum_{l=1}^5 d(x_l, y). 
    \end{split}
\end{equation*}
    Since $y^*$ was a feasible choice for the minimization over $\phi(\mathcal{P})$, the triangle inequality implies that 
    \begin{equation*}
        \begin{split}
            \frac{1}{5} \sum_{l=1}^5 d(\phi(x_l), y_{\phi}^*) & \leq \frac{1}{5} \sum_{l=1}^5 \left( d(\phi(x_l), x_l) + d(x_l, y^*) \right) \\
            & \leq \left(\frac{3 \sqrt{2}}{10}\right) \delta + \frac{1}{5} \sum_{l=1}^5 d(x_l, y^*) \\
        \end{split}
    \end{equation*}
    This implies that $$OPT_{avg}(\phi(\mathcal{P})) \leq \left(\frac{3 \sqrt{2}}{10}\right) \delta + OPT_{avg}(\mathcal{P}).$$
    Since we know that $d(x_1,x_2) = 1$, we have that $OPT_{avg}(\mathcal{P}) \geq 1/5$. So we get that
    \begin{equation}
    \label{equation:OPTBound}
        \frac{OPT_{avg}(\phi(\mathcal{P}))}{OPT_{avg}(\mathcal{P})} \leq 1 + \left(\frac{3 \sqrt{2}}{2}\right) \delta.
    \end{equation}    
    Combining equation~\ref{equation:SCBound} and equation~\ref{equation:OPTBound}:
    \begin{equation*}
        PD(\mathcal{P}) \frac{OPT_{avg}(\phi(\mathcal{P}))}{GridSum} \leq \left( 1 + (4\sqrt{2}) \delta \right) \left( 1 + \frac{3 \sqrt{2}}{2} \delta \right).
    \end{equation*}
    Now, suppose for a contradiction that $PD(\mathcal{P}) \geq 1.97$. Then we know that  
    \begin{equation*}
            \frac{GridSum}{OPT_{avg}(\phi(\mathcal{P}))} \geq \frac{1.97}{\left( 1 + (4\sqrt{2}) \delta \right) \left( 1 + \left(\frac{3 \sqrt{2}}{2}\right) \delta \right)} \\
    \end{equation*}
So for $\delta:= 1/75$, we get that $$\frac{GridSum}{OPT_{avg}(\phi(\mathcal{P}))} \geq 1.781.$$  But $\phi(\mathcal{P}) \subset G$, and computer analysis, searching over all possible configurations of points on $G$, reveals no such configuration.

\subsection*{Proof of Theorem~\ref{theorem:oligarchy}}
We analyze Random Oligarchy using a technical lemma proven in~\cite{2-agree}. For clarity of exposition, we restate the lemma.

\begin{lemma}\cite{2-agree}
\label{lemma:2agree}
The Distortion of any randomized mechanism $f$ is less than or equal to $$1+ 2\max_{p \in [0,1]} m_f(p) \frac{1-p}{p}$$ where $m_f(p)$ is the maximum probability of an alternative being output by $f$ if exactly a $p$ fraction of $N$ consider this alternative their top choice.
\end{lemma}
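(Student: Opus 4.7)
The plan is to prove the lemma by bounding, for each alternative $a$, its social cost relative to $OPT = SC(a^*)$ as a function of its top-choice fraction $p(a)$, then taking expectation over the mechanism's output in a way that crucially uses conservation of the optimal cost across agent partitions. Fix any instance and let $a^*$ be the social cost optimum. For each alternative $a$, let $N_a \subseteq N$ be the set of agents whose top choice is $a$ (ties broken arbitrarily), and write $p(a) = |N_a|/|N|$ and $S^*_a = \sum_{u \in N_a} d(u, a^*)$. Note the key identity $\sum_a S^*_a = OPT$ since every agent contributes to exactly one $N_a$.

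First I would bound $SC(a)$ by splitting the sum over $N_a$ and its complement. For agents $u \in N_a$, the ordinal preference gives $d(u,a) \le d(u,a^*)$; for agents $u \notin N_a$, the triangle inequality gives $d(u,a) \le d(u,a^*) + d(a^*,a)$. Summing yields
\[
SC(a) \;\le\; OPT + |N|(1-p(a))\,d(a, a^*).
\]
Next I would bound $d(a, a^*)$ using any $u \in N_a$: since $d(u,a) \le d(u,a^*)$, triangle inequality gives $d(a, a^*) \le 2 d(u, a^*)$, and averaging over $u \in N_a$ yields $d(a, a^*) \le 2 S^*_a / |N_a|$. Substituting gives the per-alternative bound
\[
SC(a) - OPT \;\le\; \frac{2(1-p(a))}{p(a)} \, S^*_a.
\]

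To finish, I would take expectation over the mechanism's output. Multiplying by $\Pr[f=a]$ and summing over alternatives,
\[
\mathbb{E}[SC(f)] - OPT \;\le\; 2 \sum_a \Pr[f=a]\,\frac{1-p(a)}{p(a)}\, S^*_a \;\le\; 2 \left(\max_a \Pr[f=a]\,\frac{1-p(a)}{p(a)}\right) \sum_a S^*_a,
\]
and using $\sum_a S^*_a = OPT$ together with the definitional bound $\Pr[f=a] \le m_f(p(a))$ gives $\mathbb{E}[SC(f)]/OPT \le 1 + 2\max_p m_f(p)(1-p)/p$. (Alternatives with $p(a)=0$ are handled by the convention that the bound is vacuous unless $m_f(0) = 0$, which is the natural restriction on the mechanisms considered.)

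The main subtlety, and the step I expect requires care, is the final expectation argument: a naive union bound that replaces $S^*_a$ by $OPT$ before taking the max would lose a factor of $|\Se|$ (indeed for Random Dictatorship the naive sum is unbounded while the correct bound yields distortion $3$). The tightness comes from treating $\{S^*_a\}_a$ as a probability-like weighting that sums to $OPT$, which lets the max be pulled out of the sum without any loss. The rest of the proof is essentially two applications of the triangle inequality combined with the defining property of a top choice.
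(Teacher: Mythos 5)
Your proof is correct; note that the paper itself does not prove this lemma but only restates it from the cited reference, so there is no in-paper argument to compare against. Your argument (per-alternative bound $SC(a) \le OPT + 2\frac{1-p(a)}{p(a)}S^*_a$ via two triangle inequalities, then pulling the max out of the probability-weighted sum using $\sum_a S^*_a = OPT$) is the standard and essentially the original proof of this lemma, and your handling of the $p(a)=0$ corner case is appropriate.
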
  

To apply Lemma~\ref{lemma:2agree}, we need to bound $m_f(p)$: the maximum probability of an alternative being output by $f$ if exactly a $p$ fraction of $N$ consider this alternative their top choice ($f$, in this context, is just the Random Oligarchy mechanism). We will first consider a coarse bound that allows us to upper bound Distortion by $3$ for arbitrary $|\Se|$, and will then consider a fine grained bound that takes $|\Se|$ into account.

It is straightforward from Definition~\ref{def:RO} that Random Oligarchy outputs an alternative $x \in \Se$ only if (i) it draws at least two agents who report $x$ as their favorite, or if (ii) all three agents drawn report different favorite alternatives, and $x$ is among them, in which case it is output with probability $1/3$. Given that exactly a $p$ fraction of $N$ consider such an $x$ as their top choice, it is straightforward to compute (i). A coarse upper bound on the probability of (ii) is $\binom{3}{1} p (1-p)^2$, the probability that the alternative in question is reported exactly once, and something else is reported for the other two draws. This is potentially an overestimate, since these two other alternatives reported must be distinct, but ignoring this detail gives the following coarse bound on $m_f(p)$.
\begin{equation*}
\begin{split}
	m_f(p) & \leq \binom{3}{3} p^3 + \binom{3}{2} p^2 (1-p) + \frac{1}{3} \binom{3}{1} p (1-p)^2 \\
	& = -p^3 + p^2 + p
\end{split}
\end{equation*}  
Inserting this into Lemma~\ref{lemma:2agree} and simplifying shows that the Distortion of Random Oligarchy is at most 3.
\begin{equation*}
\begin{split}
	Distortion(f) & \leq 1 + 2\max_{p \in [0,1]} \left( \left( -p^3 + p^2 + p \right) \frac{1-p}{p} \right) \\
	& = 3
\end{split}
\end{equation*} 

To obtain a more fine grained bound that considers $|\Se|$, note that the probability of (ii) is maximized when the distribution over favorite points of all agents who do not report $x$ as their favorite is uniform over the remaining $|\Se|-1$ alternatives. Taking into account that for event (ii) the two other reported alternatives must be unique yields the following bound on $m_f(p)$.
\begin{equation*}
\begin{split}
	m_f(p) & \leq \binom{3}{3} p^3 + \binom{3}{2} p^2 (1-p) \\
	& \;\;\; + \frac{1}{3} \binom{3}{1} p (1-p) \left(1-p - \frac{1-p}{|\Se|-1} \right) \\
	& = -p^3 +p^2 + p - \frac{p(p-1)^2}{|\Se|-1} 
\end{split}
\end{equation*}
Substituting this expression into Lemma~\ref{lemma:2agree} allows us to upper bound the Distortion of Random Oligarchy by $$1 + 2\max_{p \in [0,1]} \left( \left( - p^3 +p^2 + p - \frac{p(p-1)^2}{|\Se|-1}  \right) \frac{1-p}{p} \right).$$ Simplifying yields the theorem statement. 


\end{document}